\newtheorem{theorem}{Theorem}
\newtheorem{lemma}[theorem]{Lemma}
\newtheorem{corollary}[theorem]{Corollary}
\newtheorem{definition}[theorem]{Definition}
\newtheorem*{remark}{Remark}
\def\RR{{\mathbb R}}
\def\EE{{\mathbb E}}
\def\SS{{\mathbb S}}
\def\Pr{{\mathrm Pr}}
\def \d{{\mathrm d}}
\def \exp{{\mathrm exp}}
\def \e{{\mathrm e}}
\title{Products of Euclidean metrics and applications to proximity questions among curves}
\author{Ioannis Z.~Emiris\thanks{Department of Informatics \& Telecommunications, National \& Kapodistrian University of Athens, and ATHENA Research Center, Maroussi, Greece. } \qquad Ioannis Psarros\thanks{Department of Informatics \& Telecommunications, National \& Kapodistrian University of Athens.}}
\begin{document}
\maketitle

\begin{abstract}
The problem of Approximate Nearest Neighbor (ANN) search is fundamental in computer science and has benefited from significant progress in the past couple of decades. However, most work has been devoted to pointsets whereas complex shapes have not been sufficiently treated.
Here, we focus on distance functions between discretized curves in Euclidean space: they appear in a wide range of applications, from road segments to time-series in general dimension.
For $\ell_p$-products of Euclidean metrics, for any $p$, we design simple and efficient data structures for ANN, based on randomized projections, which are of independent interest.
They serve to solve proximity problems under a notion of distance between discretized curves, which generalizes both discrete Fr\'{e}chet and Dynamic Time Warping distances.
These are the most popular and practical approaches to comparing such curves.
We offer the first data structures and query algorithms for ANN with arbitrarily good approximation factor, at the expense of increasing space usage and preprocessing time over existing methods. Query time complexity is comparable or significantly improved by our algorithms; our algorithm is especially efficient when the length of the curves is bounded.

\bigskip

\noindent
Keywords: Approximate nearest neighbor, polygonal curves, Fr\'{e}chet distance, dynamic time warping

\end{abstract}

\section{Introduction}

The problem of Nearest Neighbor (NN) search is fundamental in computer science: one has to preprocess a dataset so as to answer proximity queries efficiently, for any given query object. The Approximate Nearest Neighbor (ANN) problem is a relaxation of the above problem where proximity queries are answered approximately:
a valid answer realizes a distance to the query which can be  larger than the distance between the query and its nearest neighbor, but in a controllable way.  
ANN has been enjoying a lot of attention, and significant progress has been achieved in the past couple of decades, both in the theoretical as well as in the practical realm. However, most work has been devoted to vector spaces, and complex objects have not been sufficiently treated in the literature.
Here, in order to address one such class of complex objects, we focus on distance functions for polygonal curves, which lie in the Euclidean space. Polygonal curves are essentially point sequences of varying length, and have a wide spectrum of applications, ranging from road segments in low dimensions (e.g.~\cite{giscup2017}) to time-series in arbitrary dimension (e.g.~\cite{HK02}) and up to protein backbone structures (e.g.~\cite{JXZ08}).
In general, the problem we aim to solve is as follows.

\begin{definition}[ANN]\label{Dgenann}
The input consists of $n$ polygonal curves $V_1,\ldots,V_n$, where each $V_i$ is a sequence $v_{i1},\ldots,v_{im_i}$ with each $v_{ij}\in\RR^d$, and each $m_i\leq m$ for some pre-specified $m$. Given distance function $\d(\cdot,\cdot)$, and approximation parameter  $\epsilon>0$, preprocess $V_1,\ldots,V_n$ into a data structure such that for any query polygonal curve $Q$, the data structure reports $V_j$ for which the following holds
\[
\forall i\in\{1,\dots,n\}:\; ~ \d(Q,V_j)\leq(1+\epsilon)\cdot \d(Q,V_i). 
\]
\end{definition}

 There are various ways to define dissimilarity or distance between two curves. Two very popular dissimilarity measures are the Discrete Fr\'{e}chet Distance (DFD) and the Dynamic Time Warping (DTW) distance, which are both widely studied and applied to classification and retrieval problems for various types of data. DFD is a distance metric, unlike DTW which does not satisfy the triangular inequality.
 
 It is common, in distance functions of curves, to involve the notion of a traversal for two curves. Intuitively, a traversal corresponds to a time plan for traversing the two curves simultaneously, starting from the first point of each curve and finishing at the last point of each curve. With time advancing, the traversal advances in at least one of the two curves. 
 Every traversal consists of a sequence of pairs of points, lying in each of the two curves, that have been visited at the same time while traversing.
 DFD minimizes, over all traversals, the maximum Euclidean distance between two points defining a pair in a traversal. DTW minimizes, over all traversals, the sum of Euclidean distances between points defining a pair in a traversal. 

We denote by $\ell_p^d$ the normed space $(\RR^d, \| \cdot\|_p)$, where for any $x=(x_1,\ldots,x_d)\in \RR^d$, $\|x\|_p=\left(\sum_i |x_i|^p\right)^{1/p}$. We also use the notation $\tilde{O}(f(d,m,n))$, which hides polylogarithmic factors in $f(d,m,n)$ and polynomial factors in $1/\epsilon$. 

\subsection{Previous work}
The ANN problem has been mainly addressed for datasets consisting of points. Efficient deterministic solutions exist when the dimension is bounded, e.g.~\cite{AMM09,AdFM11,HIM12} and are based on a notion of Approximate Voronoi Diagrams (AVDs) or tree-based data structures, while for high-dimensional data the state-of-the-art solutions are mainly based either on the concept of Locality Sensitive Hashing (LSH), e.g.\ \cite{HIM12,ALRW17}, or on random projections, e.g.\ \cite{IN07, AC09, AEP18} extending the celebrated result by Johnson and Lindenstrauss~\cite{JL84}. Another line of work focuses on subsets of general metrics which satisfy some sort of low intrinsic dimension assumption, e.g.~\cite{HPM05}.
The \emph{doubling dimension} of a metric space is defined as the logarithm of the minimum number of balls of radius $R/2$ which can cover any ball of radius $R$ in this metric space. We formally define and use this notion in Section~\ref{ssection:doub}, since it provided the main paradigm today for exploiting low intrinsic dimension.

We have listed only a fraction of the body of work that is available today on pointsets; however, much less is known about distances between curves which, in a sense, are the next more complex type of geometric object.

Let us start with point sequences, which are closely related to curves. For metrics $(M_1,\d_{1}),\ldots,(M_k,\d_{k})$, we define the $\ell_p$-product of $(M_1,\d_{1}),\ldots,(M_k,\d_{k})$ as the metric with domain $M_1\times\cdots\times M_k$ and distance function 
\[
\d((x_1,\ldots,x_k),(y_1,\ldots,y_k))=\left( \sum_{i=1}^k \d_{i}^p(x_i,y_i)\right)^{1/p} .
\]
If there exists an ANN data structure with approximation factor $c$ for each $(M_1,\d_{1}),\ldots,(M_k,\d_{k})$, then one can build a data structure for the $\ell_p$-product with approximation factor $O(c\log\log n)$ \cite{Athes09,I02}. 

Let us now focus on curves: several data structures aiming to solve the 
exact nearest neighbor problem for the (Discrete) Fr\'echet Distance have been devised, but they are either heuristics or they come with weak performance guarantees   \cite{giscup2017,BaldusB2017,DutschV17,BuchinDDM17,BGM17}. 
The related range searching problem in two dimensions has been  recently studied \cite{AD18}. Notice that when the assumed distance function defines a metric over the domain of curves, solutions for general metrics can be applied (see e.g. \cite{KR02, BKL06}), but further analysis on the performance would be required. In any case, the distance functions studied in this paper do not always define a metric: in particular, DTW does not satisfy the triangular inequality.   

 All existing {\em approximate} solutions solve the approximate {\em near} neighbor problem, which is essentially a decision problem, instead of the optimization ANN problem. In the approximate near neighbor problem, the input consists of a set of $n$ polygonal curves $\mathcal{P}$, an approximation parameter $\epsilon>0$, and a radius parameter $r>0$. A data structure for this problem is required to support the following type of query: for any query curve $Q$, if there exists a curve in $\mathcal{P}$ within distance $r$, then report any curve within distance $(1+\epsilon)r$, and if there is no curve within distance $(1+\epsilon)r$, then report "no".  
 It is known that a data structure for the approximate near neighbor problem can be used as a building block for solving the ANN problem. This procedure has provable guarantees on metrics~\cite{HIM12}, but it is not clear whether it can be extended to non-metric distances such as the DTW. 

The first result for DFD by Indyk~\cite{I02}, for any metric $(X,d(\cdot,\cdot))$, achieved an approximation factor of
$O((\log m + \log \log n)^{t-1})$, where $m$ is the maximum length of a curve, and $t>1$ is a trade-off parameter. 
The solution is based on an efficient data structure for $\ell_{\infty}$-products of arbitrary metrics, and achieves space and preprocessing time in $O(m^2 |X|)^{tm^{1/t}}\cdot n^{2t}$, and query time in $(m \log n)^{O(t)}$.
Table~\ref{tab:compar} states these bounds for appropriate $t=1+o(1)$, hence for a constant approximation factor.
 It is not clear whether the approach may achieve a $1+\epsilon$ approximation factor by employing more space.
 
More recently, a data structure was devised for the DFD of curves defined by the Euclidean metric~\cite{DS17}. The approximation factor is $O(d^{3/2})$. The space required is $O(2^{4md} n \log n+mn)$ and each query costs $O(2^{4md} m \log n)$. They also provide a trade-off between space/query time, and the approximation factor. For space in $O( n \log n+mn)$, they achieve query time in $O(m \log n)$ and approximation factor in  $O(m)$. Our methods can achieve any user-desired approximation factor at the expense of a reasonable increase in the space and time complexities.

Furthermore, it is shown that the result establishing an $O(m)$ approximation \cite{DS17} extends to DTW, whereas the other extreme of the trade-off has remained open.
 
Table~\ref{tab:compar} summarizes space and query time complexities, and approximation factors of the main methods for searching among discrete curves under the two main dissimilarity measures.

\subsection{Our contribution}

Our first contribution is a simple data structure for the ANN problem in $\ell_p$-products of finite subsets of $\ell_2^d$, for any constant $p$. The key ingredient is a random projection from points in $\ell_2$ to points in $\ell_p$. Although this has proven a relevant approach for ANN of pointsets, it is quite unusual to employ randomized embeddings from $\ell_2$ to $\ell_p$, $p>2$, in the context of proximity searching because no better data structures are known for such norms. In particular, the extreme case $p=\infty$ is provably harder than the cases $p\in[1,2]$ \cite{ACP08}.  
After the random projection, the algorithm ``vectorizes" all point sequences. The original problem is then translated to the ANN problem for points in $\ell_p^{d'}$, for $d'\approx d\cdot m$ to be specified later, and can be solved by simple bucketing methods in space $\tilde{O}\left(d' n \cdot (1/\epsilon)^{d'} \right)$ and query time $\tilde{O}(d' \log n)$, which is very efficient when the product $ d\cdot m$ is bounded. 

Then, we present a notion of distance between two polygonal curves, which generalizes both DFD and DTW (for a formal definition see Definition \ref{Ddist}): The \textit{$\ell_p$-distance} of two curves minimizes, over all traversals, the $\ell_p$ norm of the vector of all Euclidean distances between paired points. Hence, DFD corresponds to $\ell_{\infty}$-distance of polygonal curves, and DTW corresponds to $\ell_1$-distance of polygonal curves. 

Our main contribution is an ANN data-structure for the $\ell_p$-distance of curves, when $1\leq p<\infty$. This easily extends to $\ell_\infty$-distance of curves by solving for the $\ell_{p}$-distance, where $p$ is sufficiently large. 
Our target are methods with approximation factor $1+\epsilon$. Such approximation factors are obtained for the first time, at the expense of larger space or time complexity than in previous methods.
Moreover, a further advantage is that our methods solve ANN directly instead of requiring to reduce it to near neighbor search. We recall that a reduction to the near neighbor problem has provable guarantees on metrics~\cite{HIM12}, however we are not aware of an analogous result for non-metric distances such as the DTW.

Specifically, when $p>2$, we show that there exists a data structure with space and preprocessing time in 
\[
\tilde{O}\left(n\cdot \left( \frac{d}{p\epsilon} + 2\right)^{O\left(d m \cdot \alpha_{p,\epsilon}\right)}\right),
\]
where $\alpha_{p,\epsilon}$ depends only on $p,\epsilon$, and {query time} in $\tilde{O}(d\cdot 2^{4m + p}\log n)$. 

When specialized to DFD and juxtaposed to~\cite{DS17}, the two methods are only comparable when $\epsilon$ is a large enough fixed constant. Indeed, 
the two space and preprocessing time complexity bounds are equivalent, i.e.~they are both exponential in $d$ and $m$, but our query time is linear instead of being exponential in $d$.

When $p\in[1,2]$, there exists a data structure with space and preprocessing time in
\[\tilde{O}\left(n\cdot 2^{O\left(d m \cdot \alpha_{p,\epsilon}\right)}\right),\]
where $\alpha_{p,\epsilon}$ depends only on $p,\epsilon$, and {query time} in $\tilde{O}\left(d\cdot 2^{4m}\log n\right)$. 
This leads to the first approach that achieves $1+\epsilon$ approximation for DTW at the expense of space, preprocessing and query time complexities being exponential in $m$. Hence our method is best suited when the curve size is small.

Lastly, we focus on DFD when the ambient space has high dimension, and we discuss complexity bounds in terms of the doubling dimension, while we also design an improved algorithm for approximate near neighbor search. 

\hspace*{-1cm}\begin{table}\begin{center} 
\begin{tabular}{|l|l|l|c|l|} \hline
  & Space & Query & Approx. & Comments \\ \hline\hline
\multirow{ 5}{*}{DFD}  	&  ${O}( m^2 |X| )^{m^{1-o(1)}}\cdot O( n^{2-o(1)})$    &$\left(m \log n\right)^{O(1)}$ &  $O(1)$ & any metric, determ.~\cite{I02}
\\ 
	& $\tilde{O}(2^{4md} n )$ &$\tilde{O}(2^{4md} \log n )$  &  $O(d^{3/2})$  & $\ell_2^d$, \cite{DS17}
    \\ 
 &  $\tilde{O}( d  m^2 n ) \left( \frac{d}{\log m} +2\right)^{O(m^{O(1/\epsilon)} d \log(1/\epsilon))}$ &  $\tilde{O} (dm^{O(1/\epsilon)} 2^{4m} \log n ) $ & $1+\epsilon$ & $\ell_2^d$,  Thm~\ref{Tdfdann} 
    \\ 
 &  $\tilde{O}\left( d m n \right) \left( \frac{ \log (2/\epsilon)}{\epsilon^2} +2\right)^{O(m^{O(1/\epsilon)}\log^2(2/\epsilon) )}$ &  $\tilde{O}\left(d m^{O(1/\epsilon)} 2^{4m}\log n\right) $ & $1+\epsilon$ & $\ell_2^d$, $ddim=O(1)$, Thm~\ref{TannDFDdd} 
    \\ 
 &  $(nm)^{O(m\epsilon^{-2})} +O(dmn)$ &  $\tilde{O}\left(d\cdot 2^{4m}\cdot \log n\right)$ & $1+\epsilon$ & $\ell_2^d$,  Thm~\ref{TannDFDhd} 
 
   \\  \hline
\multirow{3}{*}{DTW}  &  ${\tilde{O}}( dmn )$  &  ${O}(m \log n)$  &   $O(m)$ & $\ell_2^d$, ~\cite{DS17} 
\\ 
&  $\tilde{O}(d m^2 n )\cdot 2^{O(m \cdot d  \log(1/\epsilon))}$  &  $\tilde{O}(d  \cdot 2^{4m} \log n)$ &   $1+\epsilon$  & $\ell_2^d$,  Thm~\ref{Tdtwann} 
\\ 
& $\tilde{O}(d\cdot 2^{4m}n^{1+\rho_u})$ & $\tilde{O}\left(d\cdot 2^{4m} n^{\rho_q}\right)$ & $1+\epsilon$ & $\ell_2^d$,  Thm~\ref{Tdtwann2} 
\\\hline
\end{tabular}
\caption{\label{tab:compar}Summary of previous results compared to this paper's: $X$ denotes the domain set of the input metric. 
All methods are randomized except for the first one (Indyk's).
We denote by $ddim$ the doubling dimension of the input space. 
All previous results have been tuned to optimize the approximation factor. Parameters $\rho_u$, $\rho_q$ are chosen to satisfy $(1+\epsilon) \sqrt{\rho_q}+\epsilon\sqrt{\rho_u}\geq \sqrt{1+2\epsilon}$.}
\end{center}
\end{table}

Our results for DTW and DFD are summarized in Table~\ref{tab:compar} and juxtaposed to the existing approaches, which were proposed in~\cite{DS17,I02}.

This paper is a follow-up of \cite{EmirisP18}. It consists of refined and polished versions of the results and the proofs that appear there. Some issues are fixed, e.g.~the dependence on $m$ in the complexity bounds for DFD, while new results have been added, namely the entire Section~\ref{Sechd}. 

The rest of the paper is structured as follows.
In Section \ref{Sseqs}, we present a data structure for ANN in $\ell_p$-products of $\ell_2$, which is of independent interest. In Section~\ref{Scurves}, we employ this result to address the $\ell_p$-distance of curves. In Section~\ref{Sechd}, we focus on DFD when the ambient space is high dimensional.
We conclude with future work.

\section{$\ell_p$-products of $\ell_2$}\label{Sseqs}

In this section, we present a simple data structure for the ANN problem in $\ell_p$-products of finite subsets of $\ell_2$. Recall that the $\ell_p$-product of $X_1,\ldots,X_m$, which are finite subsets of $\ell_2$, is a metric space with ground set $X_1 \times X_2 \times \cdots \times X_m$ and distance function: 
\[
\!\!\!\! \d((x_1,\ldots,x_m),(y_1,\ldots,y_m))=\left\| \|x_1-y_1\|_2, \|x_2-y_2\|_2, \ldots, \|x_m-y_m\|_2  \right\|_p = \left(\sum_{i=1}^m \|x_i-y_i\|_2^p\right)^{1/p}.
\] 
The algorithm first randomly embeds points from $\ell_2$ to $\ell_p$. 
Then, it is easy to reduce the original problem to an ANN problem in $\ell_p$ for large vectors corresponding to point sequences. 

For a random variable $X$ and some probability distribution $\mathcal{D}$, we use $X\sim \mathcal{D}$ to denote the fact that $X$ follows $\mathcal{D}$, while $N(\mu,\sigma^2)$ denotes the normal distribution with mean $\mu$ and variance $\sigma^2$. 
Moreover, $\EE[\cdot]$ denotes expectation, and $\langle \cdot,\cdot\rangle$ denotes inner product.

\subsection{Concentration Inequalities}\label{SSconc}

In this subsection, we prove concentration inequalities for central absolute moments of the normal distribution. Some of these results may be folklore, and the reasoning is quite similar to the one followed by proofs of the Johnson-Lindenstrauss lemma, e.g.~\cite{Mat08}. Notice also that results concerning random projections from $\ell_2$ to $\ell_p,p\in[1,2]$ are folklore, but we are also interested in the case $p>2$. In addition, the properties which are required for ANN searching are weaker than the ones which are typically investigated. 

The $2$-stability property of standard normal variables, along with standard facts about their absolute moments, imply the following claim. 
\begin{lemma}\label{ClmExp}
Let $v \in \RR^d$ and let $G$ be $k \times d$ matrix with i.i.d random variables following $N(0,1)$. Then,
\[
\EE\left[\|Gv\|_p^p\right]= c_p\cdot k \cdot  \|v\|_2 ^p,
\]
where $c_p=\frac{2^{p/2} \cdot \Gamma\left(\frac{p+1}{2}\right)}{\sqrt{\pi}}$ is a constant depending only on $p>1$. 
\end{lemma}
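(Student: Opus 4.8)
The plan is to exploit the $2$-stability of Gaussians coordinate by coordinate and then reduce everything to a single absolute moment of a standard normal. First I would write $Gv = ((Gv)_1,\ldots,(Gv)_k)$ where $(Gv)_i = \sum_{j=1}^d G_{ij} v_j$. Since the $G_{ij}$ are i.i.d.\ $N(0,1)$, each $(Gv)_i$ is a fixed linear combination of independent standard normals and hence, by $2$-stability, is distributed as $N\left(0, \sum_{j=1}^d v_j^2\right) = N(0, \|v\|_2^2)$; equivalently $(Gv)_i$ has the same law as $\|v\|_2 \cdot Z$ for $Z \sim N(0,1)$. (The degenerate case $v=0$ makes both sides zero, so we may assume $\|v\|_2 > 0$.)

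Next I would use linearity of expectation: since $\|Gv\|_p^p = \sum_{i=1}^k |(Gv)_i|^p$ and the $k$ rows of $G$ are identically distributed,
$$
\EE\left[\|Gv\|_p^p\right] = \sum_{i=1}^k \EE\left[|(Gv)_i|^p\right] = k \cdot \EE\left[\,\big|\,\|v\|_2 \cdot Z\,\big|^p\right] = k \cdot \|v\|_2^p \cdot \EE\left[|Z|^p\right].
$$
Note that independence of the rows is not even needed here, only identical distribution. It remains to identify $\EE[|Z|^p]$ with $c_p$.

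Finally I would compute the $p$-th absolute moment of the standard normal directly:
$$
\EE\left[|Z|^p\right] = \frac{2}{\sqrt{2\pi}} \int_0^\infty x^p \e^{-x^2/2}\, \d x,
$$
and substitute $u = x^2/2$, so $x = \sqrt{2u}$ and $\d x = \d u / \sqrt{2u}$, which turns the integral into $\frac{2^{p/2}}{\sqrt{\pi}} \int_0^\infty u^{(p-1)/2} \e^{-u}\, \d u = \frac{2^{p/2}}{\sqrt{\pi}} \Gamma\!\left(\frac{p+1}{2}\right) = c_p$. Combining with the previous display gives $\EE[\|Gv\|_p^p] = c_p \cdot k \cdot \|v\|_2^p$, as claimed.

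There is essentially no hard step here; the only thing to be careful about is the normalization constant in the Gaussian moment computation (the factor $2$ from folding the integral over $\RR$ to $[0,\infty)$, and the factor $1/\sqrt{2\pi}$ from the density), so that the substitution produces exactly $\Gamma\!\left(\frac{p+1}{2}\right)/\sqrt{\pi}$ rather than an off-by-a-power-of-$2$ variant. I would also remark that the same argument shows $\EE[\|Gv\|_p^p]$ is finite for all $p > 0$, but we only need $p > 1$ here.
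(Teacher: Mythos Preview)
Your proof is correct and follows the same approach as the paper: $2$-stability to identify each coordinate $(Gv)_i$ as $N(0,\|v\|_2^2)$, linearity of expectation, and the standard formula for the $p$-th absolute moment of a Gaussian. The only difference is cosmetic---you compute the moment integral explicitly via the substitution $u=x^2/2$, whereas the paper simply quotes it as a known fact.
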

\begin{proof}
Let $g=(X_1,\ldots,X_d)$ be a vector of independent random variables which follow $N(0,1)$ and let $v$ be any vector in $ \RR^d$. The $2$-stability property of Gaussian random variables implies that $\langle g,v\rangle\sim N(0,\|v \|_2^2 )$. Recall the following standard fact for central absolute moments of $Z\sim N(0,\sigma^2)$:
\[
\EE\left[|Z|^p\right]= \sigma^p \cdot \frac{2^{p/2} \cdot \Gamma\left(\frac{p+1}{2}\right)}{\sqrt{\pi}}.
\]
Hence,
 \[
\EE\left[\|Gv\|_p^p\right]= \EE \left[\sum_{i=1}^k |\langle g_{i},v \rangle|^p\right] = k \cdot \|v\|_2^p \cdot \frac{2^{p/2} \cdot \Gamma\left(\frac{p+1}{2}\right)}{\sqrt{\pi}}.
\]
\end{proof}

In the next lemma, we offer a simple upper bound on the moment generating function of $|X|^p$, where $X\sim N(0,1)$.  

\begin{lemma}
\label{lemma:gaussianmgf}
Let $X\sim N(0,\sigma^2)$, $p\geq1$, and $t>0$, then $\EE[ exp(-t|X|^p)]\leq 
exp(-t\EE[|X|^p]+t^2\EE[|X|^{2p}])$.
\end{lemma}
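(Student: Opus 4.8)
The plan is to bound $\EE[\exp(-t|X|^p)]$ by replacing the exponential with a carefully chosen quadratic lower bound on its argument and then using linearity of expectation together with the moment formula already recorded. The key elementary inequality is that for all real $u$ we have $\e^{-u} \leq 1 - u + u^2/2 \cdot \mathbf{1}[u \geq 0] + \dots$; more precisely, since we only need an upper bound and $u = t|X|^p \geq 0$ almost surely, I would use the fact that for $u \geq 0$,
$$
\e^{-u} \leq 1 - u + \frac{u^2}{2}.
$$
Applying this with $u = t|X|^p$ and taking expectations yields
$$
\EE[\exp(-t|X|^p)] \leq 1 - t\,\EE[|X|^p] + \frac{t^2}{2}\,\EE[|X|^{2p}].
$$

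Next I would use the crude but serviceable bound $1 + z \leq \e^z$, valid for all real $z$, with $z = -t\,\EE[|X|^p] + \frac{t^2}{2}\,\EE[|X|^{2p}]$, to conclude
$$
\EE[\exp(-t|X|^p)] \leq \exp\!\left(-t\,\EE[|X|^p] + \frac{t^2}{2}\,\EE[|X|^{2p}]\right) \leq \exp\!\left(-t\,\EE[|X|^p] + t^2\,\EE[|X|^{2p}]\right),
$$
where the last step just weakens $1/2$ to $1$ so the statement matches the form claimed in the lemma. All moments here are finite because $X$ is Gaussian (and can be written explicitly via the formula $\EE[|X|^p] = \sigma^p 2^{p/2}\Gamma((p+1)/2)/\sqrt{\pi}$ quoted in the proof of Lemma~\ref{ClmExp}, though the closed form is not needed).

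I do not expect any serious obstacle: this is a one-line Taylor/convexity argument and the only thing to be a little careful about is the sign, i.e. that the argument of the exponential is nonnegative so that the quadratic upper bound $\e^{-u} \le 1 - u + u^2/2$ actually holds (it does, for $u \ge 0$, since then the remainder in Taylor's theorem is $\e^{-\xi} u^3/6 \ge 0$ subtracted, wait — more carefully, $\e^{-u} = 1 - u + u^2/2 - u^3/6 + \dots$ is an alternating series with decreasing terms for $0 \le u \le 1$, and for $u > 1$ one checks $\e^{-u} \le 1 - u + u^2/2$ directly since the right side is increasing and $\ge 1/2$ while the left side is $< 1/e$). The mild point worth a sentence in the writeup is therefore just justifying $\e^{-u} \le 1 - u + u^2/2$ for all $u \ge 0$; everything else is immediate.
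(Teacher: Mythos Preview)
Your proposal is correct and follows essentially the same route as the paper: bound $\e^{-t|X|^p}$ above by a quadratic in $t|X|^p$, take expectations, then apply $1+z\le \e^z$. The paper uses the slightly looser form $\e^{x}\le 1+x+x^{2}$ for $x\le 1$ (with $x=-t|X|^p$) in place of your $\e^{-u}\le 1-u+u^{2}/2$ for $u\ge 0$, so your final weakening of $1/2$ to $1$ is exactly what brings the two arguments into alignment; the cleanest justification of your elementary inequality is the convexity computation $g(u)=1-u+u^{2}/2-\e^{-u}$, $g(0)=g'(0)=0$, $g''(u)=1-\e^{-u}\ge 0$, which you may prefer to the alternating-series/case split you sketch at the end.
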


\begin{proof}
We use the easily verified fact that for any $x\leq 1$, $exp(x)\leq 1+x+x^2$ and the standard inequality $1+x\leq \e^{x}$, for all $x\in \RR$. 
Then,
\[\EE\left[ \mathrm{e}^{-t|X|^p}\right]
\leq 1-t\cdot \EE\left[|X|^p\right]+t^2 \cdot \EE\left[|X|^{2p}\right] \leq \e^{-t\EE[|X|^p]+t^2\EE[|X|^{2p}]}.
\]
\end{proof}

In the next lemma, we bound the expectation of a certain variable to a squared power, by a function of the squared expectation of this variable's power. In particular we bound the ratio 
 $\EE[|X|^{2p}]\, /\, \EE[|X|^{p}]^2 =$  $O \left(
\Gamma\left(\frac{2p+1}{2}\right) \cdot \Gamma\left(\frac{p+1}{2}\right)^{-2} \right)$ by making use of Stirling estimates.

\begin{lemma}\label{ClmMom}
Let $X\sim N(0,1)$. Then, there exists constant $C>0$ such that for any integer $p\geq1$, $\EE[|X|^{2p}]\leq C \cdot 2^p\cdot \EE[|X|^{p}]^2$. In addition, 
there exists constant $C>0$ such that for any real $p\geq1$, $\EE[|X|^{2p}]\leq C \cdot p^2\cdot 2^p\cdot \EE[|X|^{p}]^2$.
\end{lemma}
\begin{proof}
In the following, we denote by $f(p) \approx g(p)$ the fact that $f(p)=\Theta(g(p))$. In addition, $f(p)\gtrsim g(p)$ means that $f(p) = f(p) = \Omega(g(p))$ and $g(p)=O(f(p))$. 
Notice that if $p$ is a bounded constant then the statement trivially holds, so we focus on bounding the ratio for sufficiently large $p \gg 1$.  

First we consider the case of integer $p$. 
In what follows, we use the Stirling estimate $n! \approx n^{n+1/2} \e^{-n}$ and standard facts about moments of normal variables:
\[
\EE\left[|X|^{2p}\right]=\frac{2^{p} \cdot \Gamma\left(\frac{2p+1}{2}\right)}{\sqrt{\pi}} \approx (2p-1)!!=\frac{(2p)!}{2^p \cdot p!}
\approx\left( \frac{2p}{\e}\right)^{2p} \cdot \frac{1}{2^p \cdot \left( \frac{p}{\e}\right)^p } \approx \frac{2^p  p^p }{\e^p} ~.
\]
We now analyze
\[
\EE\left[|X|^{p}\right]^2=
\left(
\frac{2^{p/2} \cdot \Gamma\left(\frac{p+1}{2}\right)}{\sqrt{\pi}} \right)^2
\approx\left( (p-1)!!\right)^2,
\]
by considering two cases: either $p$ is odd or $p$ is even.  
If $p$ is odd, then $p-1$ is even, hence 
\[
\left( (p-1)!!\right)^2 \approx \left( 2^{\frac{p-1}{2}} \cdot \left( \frac{p-1}{2}\right)! \right)^2 \approx   
\left( 2^{\frac{p-1}{2}} \cdot \left( \frac{p-1}{2}\right)^{p/2} \cdot \e^{-\frac{p-1}{2}} \right)^2
\approx
\frac{(p-1)^{p}}{\e^{p}} ~.
\]
If $p$ is even, then 
\[
((p-1)!!)^2 = \left(\frac{p!}{2^{p/2} \cdot \left(\frac{p}{2} \right)! } \right)^2 \approx \left( \frac{p^{p+1/2} \cdot  \e^{-p}}{2^{p/2} \cdot \left( \frac{p}{2}\right)^{p/2+1/2} \cdot \e^{-p/2} }\right)^2 \approx \frac{p^{p}}{\e^{p}}. 
\]
Hence, putting everything together,
\[
\frac{\EE\left[|X|^{2p}\right]}{\EE\left[|X|^{p}\right]^2} \lesssim 2^p \cdot \left(\frac{p}{p-1}\right)^p \lesssim 2^p,
\]
for sufficiently large $p$. 

If $p$ is any real number, we have:
\[
\frac{\EE\left[|X|^{2p}\right]}{\EE\left[|X|^{p}\right]^2} \leq 
\frac{\EE\left[|X|^{2\lceil p \rceil}\right]}{\EE\left[|X|^{\lfloor p \rfloor}\right]^2} 
\lesssim 2^p \cdot \left(\frac{(p+1)^{p+1}}{(p-2)^{p-1}}\right) \lesssim 2^p \cdot p^2 .
\]
\end{proof}

\begin{remark}
Lemma~\ref{ClmMom} provides two upper bounds on the ratio $\EE\left[|X|^{2p}\right]/\EE\left[|X|^p \right]^2$: one holds for any real  $p\geq 1$ whereas the other is slightly improved for any integer $p \geq 1$. 
We remark that throughout this paper, we focus on the general case that $p\geq 1$ is real, but slightly better bounds can be obtained if $p$ is an integer.  
\end{remark}

Now we pass to a tail inequality, a common tool in establishing the properties of embeddings.
The following lemma is the main ingredient of our analysis, since it provides us with a lower-tail inequality for one projected vector. The key observation here is that the random variable $\|Gv\|_p^p$ is essentially a sum of independent Gaussian central  absolute moments. 

\begin{lemma}\label{LemConP}
Let $G$ be a $k\times d$ matrix with i.i.d.~random variables following $ N(0,1)$ and consider vector $v \in \RR^d$, such that $\|v\|_2=1$. {For appropriate constant $c'>0$}, for $p\geq 1$ and $\delta \in (0,1)$, it holds
\[
\Pr\left[\|Gv\|_p^p\leq (1-\delta) \cdot \EE \left[ \|Gv\|_p^p \right] \right] \leq \e^{-c' \cdot 2^{-p}\cdot p^{-2}\cdot k\cdot \delta^2}. 
\]
\end{lemma}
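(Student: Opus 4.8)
The plan is to apply the standard Chernoff/Markov machinery for lower tails to the random variable $Y=\|Gv\|_p^p=\sum_{i=1}^k |\langle g_i,v\rangle|^p$, using the moment generating function bound just established. First I would note that since $\|v\|_2=1$, the $2$-stability property gives $\langle g_i,v\rangle\sim N(0,1)$ i.i.d., so by Lemma~\ref{ClmExp} we have $\mu:=\EE[Y]=c_p k$. For the lower tail, apply Markov's inequality to $\e^{-tY}$ for $t>0$:
$$
\Pr[Y\leq (1-\delta)\mu]=\Pr[\e^{-tY}\geq \e^{-t(1-\delta)\mu}]\leq \e^{t(1-\delta)\mu}\cdot \EE[\e^{-tY}]=\e^{t(1-\delta)\mu}\prod_{i=1}^k\EE[\e^{-t|\langle g_i,v\rangle|^p}].
$$
By the second lemma (with $\sigma=1$), each factor is at most $\exp(-t\,\EE[|X|^p]+t^2\,\EE[|X|^{2p}])$ where $X\sim N(0,1)$, so the product is at most $\exp(-t k\,\EE[|X|^p]+t^2 k\,\EE[|X|^{2p}])$. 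Writing $a:=\EE[|X|^p]=c_p$ and $b:=\EE[|X|^{2p}]$, the whole bound becomes $\exp\big(k(-t(1-\delta)a - ta + t^2 b)\big)=\exp\big(k(-t\delta a + t^2 b)\big)$.

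Next I would optimize over $t>0$: the exponent $-t\delta a+t^2 b$ is minimized at $t^\star=\delta a/(2b)$, which is positive as required, giving value $-\delta^2 a^2/(4b)$. Hence
$$
\Pr[Y\leq(1-\delta)\mu]\leq \exp\!\left(-\frac{k\,\delta^2 a^2}{4b}\right)=\exp\!\left(-\frac{k\,\delta^2\,\EE[|X|^p]^2}{4\,\EE[|X|^{2p}]}\right).
$$
Now invoke Lemma~\ref{ClmMom}, which says $\EE[|X|^{2p}]\leq C\cdot 2^p\cdot\EE[|X|^p]^2$ for an absolute constant $C$. Substituting, the ratio $\EE[|X|^p]^2/\EE[|X|^{2p}]\geq 1/(C\cdot 2^p)$, so the bound is at most $\exp\big(-k\delta^2 2^{-p}/(4C)\big)$, which is exactly the claimed inequality with $c'=1/(4C)$.

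I do not expect a genuine obstacle here: the skeleton is the textbook Johnson–Lindenstrauss-style argument, and all three supporting lemmas have been proved in the excerpt. The one point requiring a little care is checking that the application of the MGF lemma is legitimate — that lemma is stated for $X\sim N(0,\sigma^2)$ with $t>0$ and $p\geq1$ unconditionally, so there is no hidden constraint $t|X|^p\leq 1$ to worry about (the inequality $\e^x\leq 1+x+x^2$ was used only on the \emph{negative} exponent $x=-t|X|^p\leq 0\leq 1$, so it always applies). The other mild nuisance is that the constant $c'$ absorbs the constant $C$ from Lemma~\ref{ClmMom} and the factor $1/4$ from the optimization; one should also remark that the bound is only meaningful (i.e.\ the exponent is $\Theta(k\delta^2)$) in the regime of constant $p$, which is precisely the setting of the paper.
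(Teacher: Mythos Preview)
Your proposal is correct and follows essentially the same Chernoff-style argument as the paper: apply Markov's inequality to $\e^{-tY}$, plug in the MGF lemma together with Lemma~\ref{ClmMom}, and optimize over $t$ to obtain $c'=1/(4C)$. The only slip is a sign typo in the intermediate display (it should read $t(1-\delta)a - ta + t^2 b$, not $-t(1-\delta)a - ta + t^2 b$), but your simplified exponent $k(-t\delta a + t^2 b)$ and everything thereafter is right.
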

\begin{proof}
As discussed above, in studying the behavior of matrix-vector product $Gv$ it suffices to study some normal variable $X$. Formally, for $X\sim N(0,1)$ and any $t>0$, 
\[
\Pr\left[\|Gv\|_p^p\leq (1-\delta) \cdot \EE \left[ \|Gv\|_p^p\right]\right] \leq \EE\left[ \e^{-t|X|^p}\right]^k \cdot \e^{(t(1-\delta) k \cdot \EE[|X|^p])}
\leq
\e^{k(-t \cdot \EE[|X|^p]+t^2 \cdot C\cdot 2^p \cdot p^2 \cdot \EE[|X|^p]^2 + t\cdot (1-\delta)\cdot \EE[|X|^p])} ,
\]
where the first inequality holds by Markov's inequality extended to the exponential function and the 2-stability property of standard Gaussian random variables, and 
the second inequality follows by employing Lemma~\ref{lemma:gaussianmgf} and then  Lemma~\ref{ClmMom}, where $C$ is the constant appearing in Lemma~\ref{ClmMom}. 
Now, we set $t=\frac{\delta}{2 \cdot {C} \cdot 2^p \cdot p^2 \cdot \EE[|X|^p]}$, and the exponent simplifies to one term proportional to $-\delta^2$, which dominates terms linear in $\delta$. Hence, 
\[
\Pr\left[\|Gv\|_p^p\leq (1-\delta) \cdot \EE\left[ \|Gv\|_p^p\right]\right] \leq \e^{-c'\cdot 2^{-p} \cdot p^{-2} \cdot k\cdot \delta^2}, 
\]
for some constant $c'>0$. 
\end{proof}

Finally, we make use of the following one-sided Johnson-Lindenstrauss lemma (see, e.g.,~\cite{Mat08}).

\begin{theorem} \label{ThmJL}
Let $G$ be a $k\times d$ matrix with i.i.d.~random variables following $N(0,1)$ and consider vector $v \in \RR^d$. 
Then, for constant $C>0$,
\[
\Pr\left[\|Gv\|_2 \geq (1+\epsilon)\|v\|_2  \sqrt{k} \right]\leq {\e^{-C\cdot k\cdot \epsilon^2}},
\]
where $\epsilon \in (0,1/2]$.
\end{theorem}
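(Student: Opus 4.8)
The plan is to reduce the claim to a standard Chernoff bound for a chi-squared random variable, so no new idea is needed beyond what already appears in Section~\ref{SSconc}. Since both sides scale with $\|v\|_2$, I may assume $\|v\|_2=1$. By the $2$-stability of Gaussians, used already in the proof of Lemma~\ref{ClmExp}, each coordinate $\langle g_i,v\rangle$ of $Gv$ is distributed as $N(0,1)$, and these coordinates are independent because the rows $g_i$ of $G$ are. Hence $\|Gv\|_2^2=\sum_{i=1}^k Y_i^2$ with $Y_1,\dots,Y_k$ i.i.d.\ $N(0,1)$, that is, a chi-squared variable with $k$ degrees of freedom and mean $k$, and the event to be bounded is $\{\|Gv\|_2^2\geq(1+\epsilon)^2 k\}$.

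Next I would apply Markov's inequality to $e^{t\|Gv\|_2^2}$ for $t\in(0,1/2)$, using the closed form $\EE[e^{tY_1^2}]=(1-2t)^{-1/2}$ of the Gaussian exponential moment:
$$
\Pr\!\left[\|Gv\|_2^2\geq(1+\epsilon)^2 k\right]\ \leq\ e^{-t(1+\epsilon)^2 k}\,(1-2t)^{-k/2}.
$$
The exponent, viewed as a function of $t$, is minimized at $1-2t=(1+\epsilon)^{-2}$; substituting this value gives
$$
\Pr\!\left[\|Gv\|_2\geq(1+\epsilon)\sqrt{k}\right]\ \leq\ e^{-\frac{k}{2}\,((1+\epsilon)^2-1-2\ln(1+\epsilon))}.
$$

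To finish, it suffices to bound the quantity in the exponent below by $\epsilon^2$, which follows from the elementary inequality $\ln(1+\epsilon)\leq\epsilon$: indeed $(1+\epsilon)^2-1-2\ln(1+\epsilon)-\epsilon^2=2\bigl(\epsilon-\ln(1+\epsilon)\bigr)\geq 0$. This yields the theorem with $C=1/2$. There is essentially no obstacle here; the only points to watch are the direction of the inequality when passing between $\|Gv\|_2$ and $\|Gv\|_2^2$, and the constraint $t<1/2$ needed for the Gaussian moment to be finite, both of which are automatic with the choice above. Alternatively one could simply invoke the standard chi-squared concentration inequalities, but the computation above is short and self-contained.
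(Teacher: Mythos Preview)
Your argument is correct and entirely standard: the Chernoff/MGF computation for a $\chi^2_k$ variable is exactly how this one-sided Johnson--Lindenstrauss bound is usually derived, and your optimization of $t$ and the final elementary inequality $\ln(1+\epsilon)\leq\epsilon$ are both fine, giving the statement with $C=1/2$.

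Note, however, that the paper does not actually supply its own proof of Theorem~\ref{ThmJL}: it is quoted as a known fact with a reference (``see e.g.~\cite{Mat08}'') and no proof environment follows it. So there is nothing to compare against in the paper beyond observing that your derivation is precisely the textbook one the citation points to.
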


Standard properties of $\ell_p$ norms imply the loose upper tail inequality below.

\begin{corollary}\label{CorWeakUpT}
Let $p\geq 2$. 
Let $G$ be a $k\times d$ matrix with i.i.d.~random variables following $ N(0,1)$ and consider vector $v \in \RR^d$. Then, for constant $C>0$,
\[
\Pr\left[\|Gv\|_p\geq(1+\epsilon)\|v\|_2 \sqrt{k}\right]\leq {\e^{-C\cdot k\cdot \epsilon^2}},
\]
where $\epsilon \in (0,1/2]$.
\end{corollary}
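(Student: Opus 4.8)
The plan is to reduce this immediately to the one-sided Johnson--Lindenstrauss bound of Theorem~\ref{ThmJL} using the elementary monotonicity of $\ell_p$ norms in $p$. Recall that for any fixed vector $x \in \RR^k$ and any $p \geq q \geq 1$ one has $\|x\|_p \leq \|x\|_q$; in particular, for $p \geq 2$, $\|x\|_p \leq \|x\|_2$. Applying this with $x = Gv \in \RR^k$ gives $\|Gv\|_p \leq \|Gv\|_2$ pointwise (i.e.\ for every realization of $G$).

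Consequently the event whose probability we wish to bound is contained in a larger event controlled by Theorem~\ref{ThmJL}: namely
$$
\left\{ \|Gv\|_p \geq (1+\epsilon)\,\|v\|_2 \sqrt{k} \right\} \subseteq \left\{ \|Gv\|_2 \geq (1+\epsilon)\,\|v\|_2 \sqrt{k} \right\}.
$$
Taking probabilities and invoking Theorem~\ref{ThmJL} yields $\Pr[\|Gv\|_p \geq (1+\epsilon)\|v\|_2\sqrt{k}] \leq e^{-C k \epsilon^2}$ with the same constant $C$, which is exactly the claimed bound.

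There is essentially no obstacle here: the only ingredient beyond Theorem~\ref{ThmJL} is the inclusion $\ell_p \hookrightarrow \ell_2$ for $p \geq 2$ in finite dimensions, which is why the corollary is labelled as ``loose'' — it discards the gap between the $\ell_p$ and $\ell_2$ norms entirely and only serves as a crude upper tail estimate to complement the sharper lower tail inequality of Lemma~\ref{LemConP}. The one subtlety worth noting is that this argument genuinely requires $p \geq 2$; for $p \in [1,2)$ the inequality between norms goes the other way and a separate argument (or a dimension-dependent factor) would be needed, but that regime is not claimed here.
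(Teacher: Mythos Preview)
Your proof is correct and is essentially identical to the paper's own argument: the paper also uses the norm monotonicity $\|x\|_p \leq \|x\|_2$ for $p \geq 2$ to get the event inclusion and then applies Theorem~\ref{ThmJL} directly.
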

\begin{proof}
Since $p\geq 2$, we have that $\forall x\in\RR^d$, $\|x\|_p\leq \|x\|_2$. Hence, by Theorem~\ref{ThmJL},
\[
\Pr\left[\|Gv\|_p\geq(1+\epsilon)\|v\|_2\sqrt{k}\right]\leq \Pr\left[\|Gv\|_2\geq(1+\epsilon)\|v\|_2\sqrt{k}\right]\leq {\e^{-C\cdot k\cdot \epsilon^2}}.
\]
\end{proof}
Furthermore, a slightly different loose upper tail inequality can be derived when $p\in [1,2]$.
\begin{lemma}\label{LemWeakUpT2}
Let $p\in [1,2]$. 
Let $G$ be a $k\times d$ matrix with i.i.d.~random variables following $ N(0,1)$ and consider vector $v \in \RR^d$. Then, for constant $C>0$,
\[
\Pr\left[\|Gv\|_p\geq (3\cdot c_p \cdot k)^{1/p}\|v\|_2 \right]\leq {\e^{-C\cdot k}},
\]
where $c_p= \frac{2^{p/2} \cdot \Gamma\left(\frac{p+1}{2}\right)}{\sqrt{\pi}}.$
\end{lemma}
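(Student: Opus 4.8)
The plan is to deduce this from the one-sided Johnson--Lindenstrauss inequality of Theorem~\ref{ThmJL} together with a comparison between the $\ell_p$ and $\ell_2$ norms in $\RR^k$. Both sides of the claimed inequality are homogeneous in $\|v\|_2$, so I would first normalize $\|v\|_2=1$; by Lemma~\ref{ClmExp} the target threshold $(3c_pk)^{1/p}$ is then exactly $3^{1/p}$ times the $p$-th root of $\EE[\|Gv\|_p^p]$, i.e.\ a constant factor above the typical value of $\|Gv\|_p$.

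The main step is the elementary bound $\|x\|_p\le k^{1/p-1/2}\|x\|_2$, valid for all $x\in\RR^k$ and all $p\in[1,2]$ (the power-mean inequality). Applied deterministically to $x=Gv$ it gives $\|Gv\|_p\le k^{1/p-1/2}\|Gv\|_2$, so the event $\{\|Gv\|_p\ge (3c_pk)^{1/p}\}$ is contained in $\{\|Gv\|_2\ge (3c_p)^{1/p}\sqrt{k}\}$. Now I would invoke Theorem~\ref{ThmJL} with $\epsilon=1/2$ (and $\|v\|_2=1$): it yields $\Pr[\|Gv\|_2\ge (3/2)\sqrt{k}]\le \e^{-Ck/4}$. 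Provided $(3c_p)^{1/p}\ge 3/2$ for every $p\in[1,2]$, the previous event is in turn contained in $\{\|Gv\|_2\ge (3/2)\sqrt{k}\}$, and the lemma follows with constant $C/4$.

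It remains to check $(3c_p)^{1/p}\ge 3/2$ on $[1,2]$, equivalently $3c_p\ge (3/2)^p$; since $(3/2)^p\le (3/2)^2=9/4$ for $p\le 2$, it suffices that $c_p\ge 3/4$. This is the only genuine bookkeeping: writing $q=(p+1)/2\in[1,3/2]$ one has $c_p=2^{q}\Gamma(q)/\sqrt{2\pi}$, and $\frac{\d}{\d q}\log\!\big(2^{q}\Gamma(q)\big)=\ln 2+\psi(q)>0$ for $q\ge1$ because $\psi(1)=-\gamma>-\ln 2$ and $\psi$ is increasing; hence $c_p$ is nondecreasing on $[1,2]$ and $c_p\ge c_1=\sqrt{2/\pi}>3/4$. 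I do not anticipate any real obstacle here: the whole point is that the factor $k^{1/p-1/2}$ in the norm comparison converts the Johnson--Lindenstrauss bound, which naturally lives at scale $\sqrt{k}$, into a bound at scale $k^{1/p}$, which is the correct order of magnitude of $\|Gv\|_p$ when $p\le 2$.
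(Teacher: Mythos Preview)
Your proof is correct, and it takes a genuinely different route from the paper. The paper argues directly via a Chernoff bound on $\|Gv\|_p^p=\sum_i|\langle g_i,v\rangle|^p$: it estimates the moment generating function $\EE[\e^{|X|^p/3}]\le\sqrt{3}$ for $X\sim N(0,1)$ and $p\in[1,2]$ by comparing $|x|^p$ with $x^2$ in the exponent, and then the exponential Markov inequality with $t=1/3$ gives $\Pr[\|Gv\|_p^p\ge 3c_pk]\le \e^{-k(c_p-\ln\sqrt{3})}\le \e^{-k/10}$. You instead reduce to the $\ell_2$ case through the power-mean inequality $\|x\|_p\le k^{1/p-1/2}\|x\|_2$ and then invoke Theorem~\ref{ThmJL}; this is precisely the $p\le 2$ analogue of how Corollary~\ref{CorWeakUpT} handles $p\ge 2$, whereas the paper opted for a separate argument in this regime. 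Your route is more economical in that it reuses a result already available, while the paper's MGF computation delivers an explicit constant ($C=1/10$) without depending on the unspecified constant in Theorem~\ref{ThmJL}. The monotonicity check for $c_p$ via $\psi(1)=-\gamma>-\ln 2$ is a nice touch and fully justifies the numerical comparison $(3c_p)^{1/p}\ge 3/2$.
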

\begin{proof}

Let $X\sim N(0,1)$. Then, using $p\le 2$ to get the inequality between integrals, we have:
\begin{equation} \label{equation:mgf1}
\EE\left[ \e^{|X|^p/3}\right]=\frac{1}{\sqrt{2 \pi}} \int_{-\infty}^{+\infty}\e^{|x|^p/3-x^2/2} \d x\leq\frac{\sqrt{2}}{\sqrt{\pi}} \int_0^{+\infty}\e^ {x^2/3-x^2/2} \d x =\frac{\sqrt{2}}{\sqrt{\pi}} \int_0^{+\infty}\e^ {-x^2/6} \d x= \frac{\sqrt{2}}{\sqrt{\pi}} \cdot \frac{\sqrt{3\pi}}{\sqrt{2}} =  \sqrt{3}.
\end{equation}
We now show a lower bound on $c_p=\frac{2^{p/2} \cdot \Gamma\left(\frac{p+1}{2}\right)}{\sqrt{\pi}}$ for any $p\in [1,2]$. 
We first focus on the behavior of the gamma function with input $z\in [1,3/2]$:
\[
\Gamma(z)=\int_{0}^{\infty} x^{z-1} \e^{-x} \d x=
\int_{0}^{1} x^{z-1} \e^{-x} \d x + \int_{1}^{\infty} x^{z-1} \e^{-x} \d x \geq 
\int_{0}^{1} x^{3/2} \e^{-x} \d x + \int_{1}^{\infty}  \e^{-x} \d x \geq 0.73 ~.
\]
Hence, for any $p\in[1,2]$, 
\begin{equation} \label{equation:mgf2}
    c_p=\frac{2^{p/2} \cdot \Gamma\left(\frac{p+1}{2}\right)}{\sqrt{\pi}} \geq \sqrt{2/\pi} \cdot 0.73 \geq 0.58~.
\end{equation}
Now, assume wlog that $\|v\|_2=1$. Then,
\[
\Pr\left[\|Gv\|_p^p\geq 3 \cdot \EE \left[ \|Gv\|_p^p\right]\right] \leq \EE\left[ \e^{|X|^p/3}\right]^k \cdot \e^{-  k \cdot \EE[|X|^p]}
\leq\frac{3^{k/2}}{\e^{k\cdot c_p }}
\leq \e^{-k( c_p-0.55)}\leq \e^{-k/100},
\]
where the first inequality holds by Markov's inequality extended to the (monotonically increasing) exponential function; the second inequality holds by using Inequality~(\ref{equation:mgf1}) for the numerator and Lemma~\ref{ClmExp} to establish the denominator; 
whereas in the last inequality, we used Inequality~(\ref{equation:mgf2}).
\end{proof}

\subsection{Embedding $\ell_2$ into $\ell_p$}\label{SSemb}

In this subsection, we present our main results concerning ANN for $\ell_p$-products of $\ell_2$. First, we show that a simple random projection maps points from $\ell_2^d$ to $\ell_p^{k}$, where $k = \tilde{O}(d)$, without contracting norms in an arbitrary fashion, most of the time. In particular, the probability of failure of this scheme decays exponentially with $k$. 
For our purposes, there is no need for an almost isometry between norms. Hence, our efforts focus on proving lower tail inequalities which imply that, with good probability, no point lying far away in the original space may correspond to an approximate nearest neighbor in the projected space. 

We now prove specific bounds concerning the contraction of distances of the embedded points. Our proof builds upon the inequalities developed in Subsection~\ref{SSconc}.

\begin{theorem}\label{ThmEmb}
Let $G$ be a $k\times d$ matrix with i.i.d.~random variables following $N(0,1)$. Then,
\begin{itemize}
\item if  $2<p<\infty$ then, 
\[\Pr \left[ \exists v \in \RR^d:~\|Gv\|_p \leq \frac{(c_p\cdot k)^{1/p} }{1+\epsilon} \cdot \|v\|_2 \right] \leq O\left( \frac{k^{\frac{1}{2}-\frac{1}{p}}}{p\epsilon }+2\right)^d \cdot {\e^{-c' \cdot 2^{-p} \ \cdot k\cdot(\epsilon/(2+p\epsilon))^2}},\]
\item if $p\in[1,2]$ then, 
\[\Pr \left[ \exists v \in \RR^d:~\|Gv\|_p \leq \frac{(c_p\cdot k)^{1/p} }{1+\epsilon} \cdot \|v\|_2 \right] \leq O\left( \frac{1}{\epsilon }\right)^d \cdot {\e^{-c' \cdot k\cdot(p\epsilon/(2+p\epsilon))^2}},\]
\end{itemize}

where 
$c'>1$ is a constant, $\epsilon\in(0,1/2]$.
\end{theorem}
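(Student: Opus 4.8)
The plan is to combine a standard net argument over the unit sphere $S^{d-1}$ with the lower tail inequality for a single projected vector (Lemma~\ref{LemConP}) and the loose upper tail inequalities (Corollary~\ref{CorWeakUpT} for $p>2$, Lemma~\ref{LemWeakUpT2} for $p\in[1,2]$). First I would reduce the statement about all $v\in\RR^d$ to a statement about all unit vectors, by homogeneity of both $\|Gv\|_p$ and $\|v\|_2$ in $v$. So it suffices to bound $\Pr[\exists v\in S^{d-1}:~\|Gv\|_p\le (c_p k)^{1/p}\|v\|_2/(1+\epsilon)]$. Then I would fix a $\gamma$-net $\mathcal{N}$ of $S^{d-1}$ of size $(1+2/\gamma)^d$ (standard volumetric bound), with $\gamma$ a parameter to be tuned; the key point is that the ``bad'' event for an arbitrary $v$ forces either the bad event at its nearest net point $u$, or a large value of $\|G(v-u)\|_p$ relative to $\|v-u\|_2\le\gamma$.

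The core estimate is then a union bound over $\mathcal{N}$: for each fixed unit $u$, Lemma~\ref{LemConP} with $\delta$ chosen so that $(1-\delta)=1/(1+\epsilon')^p$ for a suitably shrunk $\epsilon'$ (this is where the quantity $p\epsilon/(2+p\epsilon)$ comes from — one wants $1-(1+x)^{-p}\gtrsim px/(1+px)$ type bounds, which I would verify by a short calculus lemma) gives $\Pr[\|Gu\|_p^p\le (1-\delta)c_p k]\le \e^{-c'2^{-p}k\delta^2}$, and hence $\Pr[\|Gu\|_p\le (c_p k)^{1/p}/(1+\epsilon')]\le \e^{-c'2^{-p}k(p\epsilon/(2+p\epsilon))^2}$ after absorbing constants. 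Multiplying by $|\mathcal{N}|=(1+2/\gamma)^d$ yields the first factor in each bound; the two cases differ only because for $p\in[1,2]$ the factor $2^{-p}$ in the exponent is a constant that can be absorbed into $c'$, whereas for $p>2$ it must be kept explicit. The net resolution $\gamma$ has to be taken roughly $\gamma\approx p\epsilon/k^{1/2-1/p}$ in the first case (so that the off-net error term $\|G(v-u)\|_p\le \gamma\cdot(\text{upper bound on }\|G\cdot\|_p)$ is dominated), giving $1+2/\gamma = O(k^{1/2-1/p}/(p\epsilon)+2)$, matching the stated base; in the second case the better upper tail of Lemma~\ref{LemWeakUpT2} lets $\gamma\approx\epsilon$ suffice, giving base $O(1/\epsilon)$.

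The off-net control is the step I would carry out most carefully. Given any unit $v$, pick $u\in\mathcal{N}$ with $\|v-u\|_2\le\gamma$; then $\|Gv\|_p\ge\|Gu\|_p-\|G(v-u)\|_p$ by the triangle inequality. For $p>2$ I bound $\|G(v-u)\|_p$ via Corollary~\ref{CorWeakUpT}: with probability $1-\e^{-Ck\epsilon^2}$ (taking a single application with a constant $\epsilon$, or a slightly larger constant), $\|G(v-u)\|_p\le \|G(v-u)\|_2 \le 2\|v-u\|_2\sqrt k\le 2\gamma\sqrt k$; but since we actually need $\|G(v-u)\|_p$ small compared to $(c_p k)^{1/p}$, note $(c_p k)^{1/p}\approx k^{1/p}$ while $\gamma\sqrt k = \gamma k^{1/2}$, so we need $\gamma\lesssim k^{1/p-1/2}\cdot(\text{small constant involving }p\epsilon)$, exactly the choice above. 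Crucially, this bound on $\|G\cdot\|_p$ restricted to the subspace spanned by net-pair differences should hold simultaneously for all $v$ by a single global event — I would invoke the one-sided JL bound uniformly, i.e.\ use that $\|Gx\|_2\le 2\|x\|_2\sqrt k$ for all $x\in\RR^d$ simultaneously fails with probability $\le (O(1))^d\e^{-Ck}$, which is dominated by the main term. For $p\in[1,2]$ the analogous global bound comes from Lemma~\ref{LemWeakUpT2}. The main obstacle, and the place where the argument is delicate, is balancing three things at once: the net size $(1+2/\gamma)^d$ must not overwhelm the per-point failure probability $\e^{-c'2^{-p}k\delta^2}$ (this forces $k$ to be taken $\tilde O(d\cdot 2^p/(p\epsilon)^2)$ later when this theorem is applied), the off-net slack must be genuinely smaller than the gap between $(c_p k)^{1/p}$ and $(c_p k)^{1/p}/(1+\epsilon)$, and the $\epsilon'$-vs-$\epsilon$ bookkeeping (shrinking $\epsilon$ by constant factors at each triangle-inequality step) must still leave the clean form $p\epsilon/(2+p\epsilon)$ in the exponent. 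I would handle the last point by a short standalone inequality: for $\epsilon\in(0,1/2)$ and $p\ge1$, $1-(1+\epsilon/2)^{-p}\ge c\cdot(p\epsilon/(2+p\epsilon))$ for an absolute $c$, proved by distinguishing $p\epsilon\le1$ (Bernoulli/Taylor) from $p\epsilon>1$ (where the left side is $\Omega(1)$).
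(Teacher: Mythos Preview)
Your proposal is correct and follows essentially the same route as the paper: reduce to unit vectors, take a $\gamma$-net on $S^{d-1}$, apply Lemma~\ref{LemConP} pointwise on net points (with $\delta=p\epsilon/(2+p\epsilon)$, obtained in the paper simply via $(1+\epsilon)^p\ge 1+p\epsilon/2$), and control the off-net error using Corollary~\ref{CorWeakUpT} for $p>2$ or Lemma~\ref{LemWeakUpT2} for $p\in[1,2]$, tuning $\gamma$ exactly as you describe. The one technical difference is how the uniform upper bound on $\|Gx\|_p$ over the sphere is obtained: you propose a separate global operator-norm event ($\|Gx\|_2\le 2\sqrt{k}\,\|x\|_2$ for all $x$, via its own net argument), whereas the paper uses the \emph{same} net together with the self-improvement trick $M:=\max_{x\in S^{d-1}}\|Gx\|_p\le \|Gu\|_p+\gamma M\Rightarrow M\le \|Gu\|_p/(1-\gamma)$ to bootstrap the bound from the net points alone --- slightly slicker, but equivalent in outcome and in the resulting net size.
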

\begin{proof} 
By Lemma~\ref{ClmExp}, we have that $\EE\left[\|Gv\|_p^p\right]= c_p\cdot k \cdot  \|v\|_2 ^p$, 
where $c_p=\pi^{-1/2}{2^{p/2} \cdot \Gamma\left(\frac{p+1}{2}\right)}$.  
We now employ Lemma~\ref{LemConP} to derive the second inequality below, for $\delta=\frac{\epsilon p}{2+p\epsilon}$:
\[
 \Pr \left[ \|Gv\|_p^p \leq \frac{c_p\cdot k}{(1+\epsilon)^p}  \cdot \|v\|_2^p \right] \leq  \Pr \left[ \|Gv\|_p^p \leq \frac{c_p\cdot k}{1+p\epsilon/2}  \cdot \|v\|_2^p \right] 
 \leq 
 {\e^{-c'\cdot 2^{-p} \cdot p^{-2} \cdot k\cdot(p\epsilon/(2+p\epsilon))^2}},
\]
where the first inequality derives from a power expansion, $c'>0$ is a universal constant, and we have used $1-\delta=1/(1+p\epsilon/2)$. 

In order to bound the probability of contraction among all distances, we argue that it suffices to use the strong bound on distance contraction, which is derived in Lemma~\ref{LemConP}, and the weak bound on distance expansion from Corollary~\ref{CorWeakUpT} or Lemma~\ref{LemWeakUpT2}, for a $\delta$-dense set $N\subset \SS^{d-1}$, for $\delta$ to be specified later. First, a simple volumetric argument \cite{HIM12} shows that there exists $N\subset \SS^{d-1}$ such that, for all vectors $ x \in \SS^{d-1}$, there exists a vector $ y\in N$ $\|x-y\|_2\leq \delta$, and $|N|= O\left({1}/{\delta}\right)^d$. 

The rest of the proof distinguishes the two cases in the statement.

\paragraph{The first case is $p>2$.}
From now on, suppose that for any $u\in N$, $\|Gu\|_p > {(c_p\cdot k)^{1/p}}/{(1+\epsilon)}$ and $\|Gu\|_p < 2\sqrt{k}$, which are both achieved with probability at least 
\begin{equation}\label{union1}
1- \sum_{u\in N} \left(\Pr\left[\|Gu\|_p\leq {(c_p\cdot k)^{1/p}}/{(1+\epsilon)}  \right] + \Pr\left[ \|Gu\|_p\geq 2\sqrt{k}\right] \right)
\geq 
1- O\left( \frac{1}{\delta}\right)^d\cdot{\e^{-c' \cdot 2^{-p} \cdot p^{-2}\cdot k\cdot(p\epsilon/(2+p\epsilon))^2}}, 
\end{equation}
for a constant $c'>0$, 
by taking a union bound over all vectors in $N$, and employing the aforementioned probability bound for the event 
$\|Gu\|_p\leq {(c_p\cdot k)^{1/p}}/{(1+\epsilon)}$ and Corollary~\ref{CorWeakUpT} for the event 
$\|Gu\|_p\geq 2\sqrt{k}$. 

Let $x$ be an arbitrary vector in $\RR^d$ such that $\|x\|_2=1$. Then, there exists $u\in N$ such that $\|x-u\|_2\leq\delta$. By the triangular inequality we obtain the following:
\begin{equation}\label{ineq1}
\|Gx\|_p\leq \|Gu \|_p+\|G(x-u)\|_p=\|Gu \|_p+\|x-u\|_2 \left\|G\frac{(x-u)}{\|x-u\|_2}\right\|_p\leq 
\|Gu \|_p+\delta \left\|G\frac{(x-u)}{\|x-u\|_2}\right\|_p .
\end{equation}
Let $M=\max_{x\in\SS^{d-1}} \|Gx\|_p$. 
The existence of $M$ is implied by the fact that $\SS^{d-1}$ is compact and $x\mapsto \|x\|_p$, $x\mapsto Gx$ are continuous functions.
Then, by substituting $M$ into Inequality~(\ref{ineq1}), one obtains
\[
M\leq\|Gu\|_p+\delta M \implies M\leq \frac{\|Gu\|_p}{1-\delta}\leq  \frac{2\sqrt{k}}{1-\delta},
\]
where the last inequality holds conditioned on the event analyzed in Equation~(\ref{union1}). 

Let us apply again the triangular inequality, namely
$\|Gu\|_p \leq \|G(u-x)\|_p + \|Gx\|_p =|-1| \cdot\|G(x-u)\|_p +\|Gx\|_p$, which implies   
\[
\|Gx\|_p \geq \|Gu\|_p-\|G(x-u)\|_p \geq \frac{(c_p \cdot k)^{1/p}}{1+\epsilon}-\frac{2 \delta \sqrt{k}}{1-\delta}\geq \frac{1-\epsilon/2}{1+\epsilon} \cdot (c_p \cdot k)^{1/p},
\]
for $\delta \leq \frac{\epsilon\cdot (c_p \cdot k )^{1/p}}{2\sqrt{k}+\epsilon \cdot(c_p\cdot k)^{1/p}}$.
Notice that 
\[
\frac{1}{\delta}=O\left(\frac{k^{1/2-1/p}}{p\epsilon}\right)+1,\]
and by substituting this bound into Equation~(\ref{union1}), we obtain the desired lower bound on the probability of success.  

\paragraph{The second case is $p\in[1,2]$.}
Now, it is possible to apply a better bound on the distance expansion, namely Lemma~\ref{LemWeakUpT2}. Let us assume that, for any $u\in N$, it holds $\|Gu\|_p > {(c_p\cdot k)^{1/p}}/{(1+\epsilon)}$ and $\|Gu\|_p < (3 \cdot c_p \cdot k)^{1/p}$, which are both achieved with probability at least 
\begin{equation}\label{union2}
1-\sum_{u\in N} \left( \Pr\left[ \|Gu\|_p\leq {(c_p\cdot k)^{1/p}}/{(1+\epsilon)}\right] + \Pr \left[ \|Gu\|_p\geq (3 \cdot c_p \cdot k)^{1/p} \right] \right) \geq
1- O\left( \frac{1}{\delta}\right)^d\cdot{\e^{-c' \cdot k\cdot(p\epsilon/(2+p\epsilon))^2}},
\end{equation}
for a constant $c'>0$, by taking again a union bound over all vectors in $N$.  
Once again, we use Inequality~(\ref{ineq1}) to obtain:
\[
M\leq \frac{\|Gu\|_p}{1-\delta} \leq \frac{(3 \cdot c_p \cdot k)^{1/p}}{1-\delta}
\implies 
\|Gx\|_p\geq \|Gu\|_p-\|Gx-Gu\|p\geq (c_p \cdot k)^{1/p} \left( \frac{1}{1+\epsilon} - \frac{3^{1/p}\cdot \delta}{1-\delta}\right)\implies\]
\[\implies
\|Gx\|_p\geq (c_p \cdot k)^{1/p}  \cdot \frac{1-\epsilon/2}{1+\epsilon},
\]
for $\delta = \epsilon/10 \leq  \epsilon/(6(1+\epsilon)+\epsilon)$. 
By setting $1/\delta=O(1/\epsilon)$ into Equation~(\ref{union2}), we obtain the desired bound on the probability of success.
\end{proof}

Theorem~\ref{ThmEmb} implies that the ANN problem for $\ell_p$-products of $\ell_2$ translates to the ANN problem for $\ell_p$-products of $\ell_p$. The latter easily translates to the ANN problem in $\ell_p^{d'}$. 
One can then solve the approximate {\em near} neighbor decision problem in $\ell_p^{d'}$, by approximating $\ell_p^{d'}$ balls of radius~$1$ with a regular grid of side length $\epsilon/(d')^{1/p}$. 
Each approximate ball is essentially a set of $O(1/\epsilon)^{d'}$ cells \cite{HIM12}: each cell either contains an index to their respective approximate near neighbor or it is empty. Now, storing  non-empty cells in a hashtable suffices for queries: each query is either mapped to an existing bucket in the hashtable, which contains an acceptable answer, or it does not belong to any existing bucket, which implies that all data points are approximately far.     
Building polylogarithmically-many approximate near neighbor data structures for various radii leads to an efficient solution for the ANN problem~\cite{HIM12}. 

 \begin{theorem}\label{Tannseqs}
There exists a data structure which solves the ANN problem for point sequences in $\ell_p$-products of $\ell_2$, and satisfies the following bounds on performance:
 \begin{itemize}
 \item If $~p\in [1,2]$, then space usage and preprocessing time is in \[\tilde{O}(d m n )\times \left( \frac{1}{\epsilon} \right)^{O(m \cdot d \cdot \alpha_{p,\epsilon})},\] query time is in $\tilde{O}(dm\log n)$, and $\alpha_{p,\epsilon}=\log(1/\epsilon)\cdot(2+p\epsilon)^2\cdot (p\epsilon)^{-2} $.
 \item 
 If $~2<p<\infty$, then space usage and preprocessing time is in \[\tilde{O}(d m n )\times \left( \frac{d}{p\epsilon} +2\right)^{O(m \cdot d \cdot  \alpha_{p,\epsilon})},\] query time is in $\tilde{O}\left(dm\cdot 2^p\log n\right)$, and $\alpha_{p,\epsilon}=2^p  \cdot\log(1/\epsilon)\cdot(2+p\epsilon)^2\cdot \epsilon^{-2} $.
 \end{itemize}
 We assume $\epsilon \in (0,1/2]$. 
 The probability of success is $\epsilon/2$ and can be amplified to $1-\delta$, by building $\Omega(\log(1/\delta)/\epsilon)$ independent copies of the data-structure. 
\end{theorem}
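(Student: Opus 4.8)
The plan is to push the whole problem through the random map of Theorem~\ref{ThmEmb}: embed the $\ell_p$-product of $\ell_2^d$ into a single space $\ell_p^{d'}$, solve ordinary ANN there by a grid/bucketing construction, wrap it with the textbook reduction from ANN to approximate near-neighbor, and add a final re-ranking of the few returned candidates by their \emph{exact} source distance. Concretely, I would fix one $k\times d$ Gaussian matrix $G$ and set $\Phi(x_1,\dots,x_m)=(Gx_1,\dots,Gx_m)$; since the $\ell_p$-product of copies of $\ell_p$ is just $\ell_p$ on the concatenation, $\Phi$ maps each point sequence to a vector of $\ell_p^{d'}$ with $d'=km$, and by Lemma~\ref{ClmExp} it rescales the $p$-th power of every sequence distance by $c_pk$ in expectation. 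I pick $k=\tilde{\Theta}(d\,\alpha_{p,\epsilon})$, the smallest value making the failure probability in Theorem~\ref{ThmEmb} a small constant (polylog$(d)$ factors hidden), so $d'=\tilde{\Theta}(dm\,\alpha_{p,\epsilon})$; using a single $G$ for all $m$ blocks, the uniform event ``$\|Gv\|_p\ge (c_pk)^{1/p}\|v\|_2/(1+\epsilon)$ for all $v\in\RR^d$'' then holds at once for every block-difference of every curve, which is exactly the no-contraction guarantee I need and it costs no union bound over blocks.

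The lower tail above only yields a constant-factor guarantee, so the crux --- and, I expect, the main obstacle --- is the matching \emph{upper} control on the one pair that matters: a (fixed, adversarial) query $Q$ and its true nearest neighbor $V^\ast$. For a fixed $w$, a Chernoff bound symmetric to the one in Lemma~\ref{LemConP}, using the moment estimate of Lemma~\ref{ClmMom} to dominate the exponential moments of $|X|^p$, should give $\Pr[\|Gw\|_p^p\ge(1+\epsilon)c_pk\|w\|_2^p]\le \e^{-\Omega(2^{-p}k\epsilon^2)}$ (the exponential moments exist outright for $p\in[1,2]$; for $p>2$ the heavier tail only inflates the $p$-dependent constant, already absorbed into $\alpha_{p,\epsilon}$, and one may fall back on Lemma~\ref{LemWeakUpT2}/Corollary~\ref{CorWeakUpT} where needed). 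Summing over the $m$ blocks of a fixed pair and recalling the choice of $k$, the event $d'(\Phi Q,\Phi V)^p\le(1+\epsilon)\,c_pk\,d(Q,V)^p$ holds for any fixed pair with constant probability, hence for $Q$ together with $V^\ast$. Combined with the no-contraction bound this sandwiches, for that pair, the embedded distance between $(c_pk)^{1/p}d(Q,V^\ast)/(1+\epsilon)$ and $(1+\epsilon)^{1/p}(c_pk)^{1/p}d(Q,V^\ast)$ --- precisely what a $1+\epsilon$ (rather than constant-factor) guarantee requires, and the reason the loose upper bounds alone do not suffice.

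It remains to solve ANN in $\ell_p^{d'}$ and translate the guarantees back. For each radius $R$ in a $(1+\epsilon)$-geometric sequence I build the grid-based $(R,(1+\epsilon)R)$-near-neighbor structure of \cite{HIM12}: an $\ell_p^{d'}$-ball of radius $R$ is covered by a grid of side $\Theta(\epsilon R/(d')^{1/p})$, costing, per stored point, $\left(\frac{1}{\epsilon}\right)^{O(d')}$ cells when $p\le2$, and $\left(\frac{d}{p\epsilon}+2\right)^{O(d')}$ cells when $p>2$ --- the extra factor being exactly the net granularity $k^{1/2-1/p}/(p\epsilon)$ forced by Theorem~\ref{ThmEmb}. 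The standard reduction from ANN to approximate near-neighbor \cite{HIM12} uses only polylogarithmically many effective radii; a query computes $\Phi(Q)$, probes the relevant structures, collects one candidate curve per radius, and returns the one minimizing the exact source distance $d(Q,\cdot)$ (each computed in $O(dm)$ time). At the radius just above $d'(\Phi Q,\Phi V^\ast)$ the probed structure returns some $V$ with $d'(\Phi Q,\Phi V)\le(1+\epsilon)R$; chaining the upper bound of the previous paragraph with the no-contraction bound turns this into $d(Q,V)\le(1+O(\epsilon))\,d(Q,V^\ast)$, and re-ranking makes the final answer no worse than $V$. For the accounting: space and preprocessing are $n$ times the per-ball cell count, up to the $\tilde{O}(dmn)$ bookkeeping and the polylog-in-$n$ radius overhead, which with $d'=\tilde{\Theta}(dm\,\alpha_{p,\epsilon})$ matches the stated bounds; query time is dominated by forming $\Phi(Q)$ and the polylogarithmically many cell lookups, each $O(d')$, i.e.\ $\tilde{O}(dm\log n)$ for $p\le2$ and $\tilde{O}(dm\,2^p\log n)$ for $p>2$ (the surviving $2^p$ coming from $\alpha_{p,\epsilon}$). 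A single copy succeeds with probability bounded below in terms of $\epsilon$ (the $\epsilon$ loss sits in the near-neighbor-to-ANN reduction), and $\Theta(\epsilon^{-1}\log(1/\delta))$ independent copies, answering by the overall best candidate, boost this to $1-\delta$.
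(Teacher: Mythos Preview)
Your overall architecture is exactly the paper's: apply a single Gaussian matrix $G$ to every block, use Theorem~\ref{ThmEmb} as a uniform no-contraction guarantee, concatenate to land in $\ell_p^{km}$, and solve ANN there by the grid/AVD construction of \cite{HIM12}. The choice of $k$ and the resulting space/time arithmetic match the paper (the base $(d/(p\epsilon)+2)$ in the $p>2$ bound comes from the $\log(d/(p\epsilon))$ factor inside $k$, not from a different grid, but this is only a matter of presentation).

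The one genuine gap is the upper-tail step for the pair $(Q,V^\ast)$ when $p>2$. You propose ``a Chernoff bound symmetric to the one in Lemma~\ref{LemConP}''; that symmetry fails. Lemma~\ref{LemConP} uses $\EE[\e^{-t|X|^p}]$, which is finite for all $t>0$, whereas the upper-tail analogue requires $\EE[\e^{t|X|^p}]$, which is \emph{infinite} for every $t>0$ once $p>2$ (the tail of $|X|^p$ is only sub-Weibull of order $2/p<1$). So the heavier tail does not ``only inflate the $p$-dependent constant''; the MGF does not exist and the claimed $\e^{-\Omega(2^{-p}k\epsilon^2)}$ bound is not available. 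Your fallbacks do not rescue this: Lemma~\ref{LemWeakUpT2} is restricted to $p\in[1,2]$, and Corollary~\ref{CorWeakUpT} controls $\|Gw\|_p$ only up to $(1+\epsilon)\sqrt{k}\,\|w\|_2$, which for $p>2$ exceeds $(c_pk)^{1/p}\|w\|_2$ by the factor $k^{1/2-1/p}$ and therefore cannot yield a $(1+\epsilon)$-expansion bound on the nearest neighbor.

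The paper sidesteps this entirely: it applies Markov's inequality directly to the aggregate $\sum_{i=1}^m\|G(u_i-q_i)\|_p^p$, whose expectation is $c_pk\sum_i\|u_i-q_i\|_2^p$ by Lemma~\ref{ClmExp}, obtaining
\[
\Pr\!\left[\sum_i\|G(u_i-q_i)\|_p^p\ge (1+\epsilon)^p c_pk\sum_i\|u_i-q_i\|_2^p\right]\le (1+\epsilon)^{-p}.
\]
This is weaker than an exponential tail but is all that is needed, and it is precisely the origin of the $\Omega(\epsilon)$ success probability in the theorem statement (not the near-neighbor-to-ANN reduction, as you write). With this substitution your argument goes through; for $p\in[1,2]$ your Chernoff route would also work, but Markov already suffices uniformly in $p$.
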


 \begin{proof} 
For any vector $v$, $(v)_i$ denotes its $i$th element. For any two vectors $v,u$, $v \oplus u$ denotes the concatenation of the two vectors, and for vectors $v_1,\ldots, v_m$, $\bigoplus_i^m v_i$ is equivalent to $v_1 \oplus v_2\oplus \dots
\oplus v_m$. 
Let $G$ be a $k\times d$ matrix with i.i.d.~random variables following $N(0,1)$.
Matrix $G$ provides the random projection from points in $\ell_2$ to points in $\ell_p$. 
 Let $\delta_{p,\epsilon}=p\epsilon/(2+p \epsilon)$. We first consider the case $p>2$. 
 We employ Theorem~\ref{ThmEmb} and we map point sequences in $\ell_2^d$ to point sequences in $\ell_p^{k}$, for 
 \[k =\Theta\left(  \frac{d \cdot 2^p \cdot p^2 \cdot  \log \frac{d}{p\epsilon} }{\delta_{p,\epsilon}^2}\right).\] 
 Then, we treat points as vectors in the projected space and  
 we concatenate them: for each sequence of $m$ points in $\RR^k$, we obtain a vector in $\RR^{km}$. We now argue that building a data structure for the ANN problem in $\ell_p^{km}$ suffices, because the $\ell_p$-product-of-$\ell_p$ distance between two point sequences is equal to the $\ell_p$ distance of the two vectors produced by concatenating the two sequences. To see that, consider the distance between two sequences of $m$ points in $\RR^k$, denoted by $x_1,\ldots,x_m$ and $y_1,\ldots,y_m$:
 \[
\left( \sum_{i=1}^{m} \|x_i-y_i\|_p^p \right)^{1/p}=
\left( \sum_{i=1}^{m} \left(\left(\sum_{j=1}^d |(x_i)_j-(y_i)_j|^p \right)^{1/p}\right)^{p} \right)^{1/p}=
\left( \sum_{i=1}^{m} \sum_{j=1}^d |(x_i)_j-(y_i)_j|^p  \right)^{1/p}=\]
\[=
\left( \left\| \bigoplus_{i=1}^m x_i - \bigoplus_{i=1}^m y_i \right\|_p^p \right)^{1/p}.
 \]

 Now we analyze the probability that no false positives or false negatives occur by the random projection. 
 Fix a query point sequence $Q=q_1,\ldots,q_m\in\left(\RR^{d}\right)^m$ and consider 
 its nearest neighbor $U_*=u_1,\ldots,u_m\in\left(\RR^{d}\right)^m$. 
 By a union bound, the probability of failure for the embedding is at most 
 \[
 \Pr \left[ \exists v \in \RR^d:~\|Gv\|_p \leq \frac{(c_p\cdot k)^{1/p} }{1+\epsilon} \cdot \|v\|_2 \right]+\Pr\left[\sum_{i=1}^{m}\|Gu_i-Gq_i\|_p^p
 \leq (1+\epsilon)^p \cdot c_p \cdot k\sum_{i=1}^m\|u_i-q_i\|_2^p 
 \right].
 \]
By Theorem~\ref{ThmEmb}, the first probability is $\leq\epsilon/10$. Hence, we now bound the second probability. Notice that 
 \[
 \EE \left[\sum_{i=1}^{m}\|Gu_i-Gq_i\|_p^p \right]=
 \sum_{i=1}^{m} \EE \left[\|G(u_i-q_i)\|_p^p \right]= c_p\cdot k \sum_{i=1}^{m} \|u_i-q_i\|_2^p,
 \]
 where the last equality holds by Lemma~\ref{ClmExp}. 
 By Markov's inequality, we obtain,
 \[
 Pr\left[\sum_{i=1}^{m}\|Gu_i-Gq_i\|_p^p
 \leq (1+\epsilon)^p \cdot c_p \cdot k\sum_{i=1}^m\|u_i-q_i\|_2^p 
 \right]\leq (1+\epsilon)^{-p}.
 \] 
 Hence, the total probability of failure is $\frac{1+\epsilon/10}{(1+\epsilon)^p}\leq \frac{1+\epsilon/10}{1+\epsilon}$, and hence the probability of success is at least $1-\frac{1+\epsilon/10}{1+\epsilon}\geq \epsilon/2$.
 In the projected space and after concatenation, in order to solve the $\ell_p^{km}$ instance, we build AVDs \cite{HIM12}. The total space usage, and the preprocessing time is 
 \[\tilde{O}(d m n )\times O(1/\epsilon)^{km}=\tilde{O}(d m n )\times \left( \frac{d}{p\epsilon} +2\right)^{O(m \cdot d \cdot 2^p \cdot p^2 \cdot \log(1/\epsilon)/\delta_{p,\epsilon}^2)}.\] 
  The query time is $O((km)\log n)=\tilde{O}(dm2^p\log n)$. The probability of success can be amplified by repetition. By building $\Theta\left(\frac{\log (1/\delta)}{\epsilon}\right)$ data structures as above, the probability of failure becomes $\delta$.

The same reasoning is valid in the case $p\in[1,2]$, but it suffices to set \[k=\Theta\left(\frac{d\log \frac{1}{\epsilon}}{\delta_{p,\epsilon}^2}\right).\] 
  
 \end{proof}
When $p\in[1,2]$, we can also utilize "high-dimensional" solutions for $\ell_p$ and obtain data structures with complexities polynomial in $d\cdot m$. These data structures are particularly interesting when the complexity of the point sequences $dm$ is considerably higher than $\log n$, since in that case the data structure of Theorem \ref{Tannseqs} requires a prohibitively large amount of storage.   
Combining Theorem~\ref{ThmEmb} with the data structure of \cite{ALRW17}, we obtain the following result.
\begin{theorem}\label{TannHseqs}
There exists a data structure which solves the ANN problem for point sequences in $\ell_p$-products of $\ell_2$, $p\in[1,2]$, and satisfies the following bounds on performance: 
space usage and preprocessing time is in $\tilde{O}(n^{1+\rho_u}+dnm),$ and the query time is in $\tilde{O}(n^{\rho_q} + dm)$, where $ \rho_q,\rho_u$ satisfy:
\[
(1+\epsilon)^p \sqrt{\rho_q}+((1+\epsilon)^p-1)\sqrt{\rho_u}\geq \sqrt{2(1+\epsilon)^p-1}.
\]
 We assume $\epsilon \in (0,1/2]$. 
 The probability of success is $\epsilon/10$ and can be amplified to $1-\delta$, by building $\Omega(\log(1/\delta)/\epsilon)$ independent copies of the data-structure. 
\end{theorem}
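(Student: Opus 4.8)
The plan is to follow the proof of Theorem~\ref{Tannseqs} almost verbatim up to the moment when, after the random projection, one faces an ANN instance in a high‑dimensional $\ell_p$ space, and then to replace the grid‑based (AVD) structure used there by the Euclidean time--space trade‑off of \cite{ALRW17}.

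First I would apply Theorem~\ref{ThmEmb} in the regime $p\in[1,2]$ with a Gaussian matrix $G$ having $k=\Theta\!\left(d\log(1/\epsilon)/\delta_{p,\epsilon}^2\right)$ rows, where $\delta_{p,\epsilon}=p\epsilon/(2+p\epsilon)$; this makes the probability that some direction $v\in\RR^d$ satisfies $\|Gv\|_p\le (c_p k)^{1/p}\|v\|_2/(1+\epsilon)$ at most a small constant fraction of $\epsilon$. Concatenating the $m$ projected blocks maps each input sequence to a point of $\ell_p^{km}$; after the global rescaling by $(c_p k)^{-1/p}$, the $\ell_p$-product distance is contracted by a factor at most $1+\epsilon$ on every pair, and, as in Theorem~\ref{Tannseqs}, Markov's inequality applied to $\EE\big[\sum_i\|G(u_i-q_i)\|_p^p\big]=c_p k\sum_i\|u_i-q_i\|_2^p$ bounds by $(1+\epsilon)^{-p}$ the probability that the query $Q$ is pushed away from its true nearest neighbour $U_*$ by more than $1+\epsilon$. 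Hence a single instance yields a $(1+O(\epsilon))$-ANN in $\ell_p^{km}$ with probability $\Omega(\epsilon)$, which I would amplify to $1-\delta$ with $\Theta(\log(1/\delta)/\epsilon)$ independent copies; rescaling $\epsilon$ at the start restores the stated approximation factor $1+\epsilon$.

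It then remains to solve ANN in $\ell_p^{D}$ with $D=km=\tilde O(dm)$ and approximation $1+\epsilon$. Since an $\ell_p$-product of Euclidean metrics is itself a metric, the optimization ANN reduces, with only polylogarithmic overhead \cite{HIM12}, to bounded‑radius approximate near neighbour; and for $p\in[1,2]$ I would use the classical fact that the snowflake $(\RR^{D},\|x-y\|_p^{p/2})$ embeds isometrically into $\ell_2$, which turns a $(1+\epsilon)$-approximate near neighbour in $\ell_p$ into a $c$-approximate Euclidean near neighbour with $c=(1+\epsilon)^{p/2}$, i.e.\ $c^2=(1+\epsilon)^p$. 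Plugging $c$ into the trade‑off of \cite{ALRW17}, namely $c^2\sqrt{\rho_q}+(c^2-1)\sqrt{\rho_u}\ge\sqrt{2c^2-1}$, gives exactly $(1+\epsilon)^p\sqrt{\rho_q}+((1+\epsilon)^p-1)\sqrt{\rho_u}\ge\sqrt{2(1+\epsilon)^p-1}$, with space and preprocessing $\tilde O(n^{1+\rho_u})$ and query time $\tilde O(n^{\rho_q})$; the extra $n^{o(1)}$ and $\operatorname{poly}(D)$ factors of \cite{ALRW17}, as well as the $O(dmn)$ term for storing the projected dataset, are absorbed into $\tilde O(\cdot)$.

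The hard part will be the interface between the projection guarantee and \cite{ALRW17}: I will need to check that the $p/2$-power (snowflake) map is used purely to route the near‑neighbour queries, so that the one‑sided distortion produced by Theorem~\ref{ThmEmb} composes with it and still keeps every ``far'' curve far in the Euclidean image, and that the reduction from optimization ANN to bounded‑radius near neighbour is legitimate in this setting --- which is precisely where metricity of the $\ell_p$-product enters, and the reason the argument does not extend to the DTW. The remaining ingredients (the choice of $k$, the union bound over the embedding‑failure and Markov events, and the bookkeeping of the $\tilde O$ factors) are routine and identical to the proof of Theorem~\ref{Tannseqs}.
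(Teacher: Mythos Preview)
Your proposal is correct and follows essentially the same route as the paper: embed via Theorem~\ref{ThmEmb}, bound the nearest-neighbour expansion by Markov as in Theorem~\ref{Tannseqs}, concatenate to land in $\ell_p^{km}$, and invoke~\cite{ALRW17}. The paper's proof treats the last step as a black box (``we build the LSH-based data structure from~\cite{ALRW17}''), whereas you unpack it by passing through the $p/2$-snowflake into Hilbert space and then applying the Euclidean trade-off $c^2\sqrt{\rho_q}+(c^2-1)\sqrt{\rho_u}\ge\sqrt{2c^2-1}$ with $c^2=(1+\epsilon)^p$; this is precisely how the stated constraint on $\rho_q,\rho_u$ arises, and is the standard way \cite{ALRW17} itself handles $\ell_p$ for $p\in[1,2]$, so the two arguments coincide.
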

\begin{proof}
We proceed as in the proof of Theorem \ref{Tannseqs}. We employ Theorem~\ref{ThmEmb} and by Markov's inequality, we obtain:
 \[
 Pr\left[\sum_{i=1}^{m}\|Gv_i-Gu_i\|_p^p
 \leq (1+\epsilon)^p \cdot c_p \cdot k\sum_{i=1}^m\|v_i-u_i\|_2^p 
 \right]\leq (1+\epsilon)^{-p}.
 \] 
 Then, by concatenating vectors, we map point sequences to points in $\ell_p^{km}$, where $k=\tilde{O}(d)$. For the mapped points in $\ell_p^{km}$, we build the LSH-based data structure from \cite{ALRW17} which succeeds with high probability $1-o(1)$. By independence, both the random projection and the LSH-based structure succeed with probability $(\epsilon/2)\times(1-o(1))\geq \epsilon/10$. Finally we need an additional space of $O(dnm)$ to store and read the input.
\end{proof}

 \section{Polygonal Curves}\label{Scurves}
 
 In this section, we show that one can solve the ANN problem for a certain class of distance functions defined on polygonal curves. Since this class is related to $\ell_p$-products of $\ell_2$, we invoke results of Section~\ref{Sseqs}, and we show an efficient data structure for the case of ``short" curves, i.e.\ when $m$ is relatively small compared to the other complexity parameters. 
 
First, we need to introduce a formal definition of the traversal of two curves.  
 \begin{definition}
   Given polygonal curves $V=v_1, \ldots, v_{m_1}$, $U=u_1, \ldots, u_{m_2}$, a traversal $T= [ (i_1,j_1),\ldots,(i_t,j_t) ] $ is a sequence of pairs of indices referring to a pairing of vertices from the two curves such that:
\begin{enumerate}
 \item $i_1,j_1=1$, $i_t=m_1$, $j_t=m_2$. 
 \item $\forall (i_k, j_k)\in T:$ $i_{k+1}-i_k \in \{0,1\}$ and $j_{k+1}-j_k \in \{0,1\}$.
 \item $\forall (i_k, j_k)\in T:$ $(i_{k+1}-i_k)+(j_{k+1}-j_k)\geq1$.
\end{enumerate} 
 \end{definition} 
 Let us define a class of distance functions for polygonal curves. In this definition, it is implied that we use the Euclidean distance to measure distance between any two points. However, the definition could be easily generalized to arbitrary metrics.

\begin{definition}[$\ell_p$-distance of polygonal curves]\label{Ddist}
 Given polygonal curves $V=v_1, \ldots, v_{m_1}$, $U=u_1, \ldots, u_{m_2}$, we define the $\ell_p$-distance between $V$ and $U$ as the following function:
 \[
 \d_p(V,U)= \min_{T\in\mathcal{T}} \left( \sum_{(i,j)\in T} \| v_{i}-u_{j}\|_2^p  \right)^{1/p},
 \]
 where $\mathcal{T}$ denotes the set of all possible traversals for $V$ and $U$. 
 \end{definition}
The above class of distances for curves includes some widely known distance functions. For instance, $\d_{\infty}(V,U)$ coincides with the DFD of $V$ and $U$ (defined for the Euclidean distance). Moreover $\d_{1}(V,U)$ coincides with DTW for curves $V$, $U$.  

\begin{theorem}\label{Tanncurves}
Suppose that there exists a randomized data structure for the ANN problem in $\ell_p$ products of $\ell_2$, with space in $S(n)$, preprocessing time $T(n)$ and query time $Q(n)$, with probability of failure less than $2^{-4m-1}$. Then, there exists a data structure for the ANN problem for the $\ell_p$-distance of polygonal curves, $1\leq p<\infty$, with space in $(4\e)^{m+1}\cdot S(n)$, preprocessing time $(4\e)^{m+1}\cdot T(n)$ and query time $(4\e)^{m+1}\cdot Q(n)$, where $m$ denotes the maximum length of a polygonal curve, and the probability of failure is less than $1/2$.
\end{theorem}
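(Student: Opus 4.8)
\medskip
\noindent\textbf{Proof strategy.} I would reduce the problem, traversal by traversal, to the ANN primitive for $\ell_p$-products of $\ell_2$ given in the hypothesis, exploiting the fact that a traversal of two curves is nothing but a pair of ``inflations'' of the curves to a common length. For a curve $V=v_1,\dots,v_{m_1}$, call a sequence $a:\ 1=i_1\le\cdots\le i_t=m_1$ with $i_{k+1}-i_k\in\{0,1\}$ an \emph{inflation of $V$ of length $t$}, and write $V^{(a)}=(v_{i_1},\dots,v_{i_t})\in(\RR^d)^t$. If $a,b$ are inflations of $V,U$ of the \emph{same} length $t$, then the distance between $V^{(a)}$ and $U^{(b)}$ in the $\ell_p$-product of $t$ copies of $\ell_2$ equals $\left(\sum_{k=1}^t\|v_{i_k}-u_{j_k}\|_2^p\right)^{1/p}$. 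The two facts I would extract are: (i) this value is always at least $d_p(V,U)$, since deleting any coordinate at which \emph{both} curves stall keeps the two index sequences legal and does not increase the sum, so finitely many such deletions produce a genuine traversal; and (ii) the optimal traversal realising $d_p(V,U)$ is itself such a pair $(a^*,b^*)$, of some common length $t^*\le m_1+m_2-1\le 2m-1$, for which the $\ell_p$-product distance equals exactly $d_p(V,U)$.

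\medskip\noindent
The data structure then follows directly. For each length $t\in\{1,\dots,2m-1\}$ build the hypothesised ANN structure for the $\ell_p$-product of $t$ copies of $\ell_2$ on the point set $\{\,V_i^{(a)}:i\in[n],\ a\text{ an inflation of }V_i\text{ of length }t\,\}$. A curve of length $m'\le m$ has $\sum_{t}\binom{t-1}{m'-1}=\binom{2m-1}{m'}\le\binom{2m-1}{m}\le(4\e)^{m+1}$ inflations in total (hockey-stick identity, with generous constants), there are at most $2m-1$ lengths, and the space, preprocessing and query costs of the building block grow at least linearly in the number of stored points; summing over $t$ then gives the stated bounds $m(4\e)^{m+1}S(n)$ for space, $(4\e)^{m+1}T(n)$ for preprocessing, and $(4\e)^{m+1}Q(n)$ for queries, the constants being far from tight. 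A query $Q$ of length $m_Q$ is answered by forming, for every length $t$ and every inflation $b$ of $Q$ of length $t$, the point $Q^{(b)}$ and querying the $t$-th structure with it --- at most $\binom{2m-1}{m_Q}$ calls. For each returned curve $V_j$, recorded together with the $\ell_p$-product distance from its inflation $V_j^{(a)}$ to the query point $Q^{(b)}$ that produced it, one outputs the $V_j$ minimising this recorded distance (equivalently, one may recompute $d_p(Q,V_j)$ by dynamic programming and minimise that instead).

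\medskip\noindent
For correctness, let $V_*$ be the true nearest neighbour, $r^*=d_p(Q,V_*)$, and $(a^*,b^*)$ the inflations of $V_*$ and $Q$ of common length $t^*\le 2m-1$ supplied by fact (ii), so the $\ell_p$-product distance from $V_*^{(a^*)}$ to $Q^{(b^*)}$ equals $r^*$. In the call that queries the $t^*$-th structure with $Q^{(b^*)}$ the point $V_*^{(a^*)}$ lies at distance $r^*$, so if that call succeeds it returns some $V_j^{(a)}$ at $\ell_p$-product distance $\le(1+\epsilon)r^*$ from $Q^{(b^*)}$, whence $d_p(Q,V_j)\le(1+\epsilon)r^*$ by fact (i). Since fact (i) also gives $d_p(Q,V_{j'})\le$ its recorded value for every candidate $V_{j'}$, outputting the minimiser over all candidates returns a valid $(1+\epsilon)$-approximate nearest neighbour. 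The procedure fails only if one of its at most $\binom{2m-1}{m_Q}\le\binom{2m-1}{m}<\frac12\cdot4^m$ building-block calls fails, so by a union bound its failure probability is at most $\frac12\cdot4^m\cdot 2^{-4m-1}<1/2$.

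\medskip\noindent
The crux is the \emph{decoupling} forced by preprocessing: since neither $Q$ nor the optimal traversal is available when the structures are built, one must precompute inflations of the data curves in isolation and then argue both that pairing an arbitrary data-curve inflation with an arbitrary equal-length query inflation never underestimates $d_p$ (fact (i)) and that the optimal traversal is nonetheless among the pairings examined (fact (ii), essentially the splitting of a traversal into its ``$V$-part'' and ``$U$-part''). Everything else is bookkeeping: the combinatorial count of inflations; checking that the building block's guarantee composes cleanly, so that the potentially meaningless answers of all the other calls do no harm once we keep the best one; and the union bound over the inflated queries issued in a single search, which stays below $1/2$ precisely because of the strong $2^{-4m-1}$ per-call failure bound.
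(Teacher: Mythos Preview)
Your correctness argument (facts (i) and (ii)) is sound and, if anything, more carefully spelled out than the paper's. The gap is in the resource analysis. You build \emph{one} structure per length $t$, containing \emph{all} inflations of all data curves of that length; the $t$-th structure then holds $N_t$ points with $\sum_t N_t\le n(4\e)^{m+1}$, and its cost is $S(N_t)$, not $S(n)$. Your sentence ``the space, preprocessing and query costs of the building block grow at least linearly in the number of stored points'' points the inequality the wrong way: to get $\sum_t S(N_t)\le (4\e)^{m+1}S(n)$ you would need $S(N)/N$ non-increasing (i.e.\ $S$ at most linear), which the hypothesis does not provide. The same objection applies to $T$ and $Q$; for instance if $S(N)=N^{1+\rho}$ with $\rho>0$ your bound becomes $(4\e)^{(1+\rho)(m+1)}S(n)$, not $(4\e)^{m+1}S(n)$.

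The paper avoids this by a finer bucketing: it parameterises traversals not by length alone but by the full ``shape'' $(l,A,B)$, where $A$ records the steps at which only the query advances and $B$ the steps at which both advance. Crucially, $(l,A)$ alone pins down the inflation of the data curve, so each bucket receives at most one point per input curve, hence at most $n$ points, and $S(n)$, $T(n)$, $Q(n)$ apply verbatim. The combinatorial count of triples $(l,A,B)$ is then bounded by $\binom{4m+1}{m+1}\le(4\e)^{m+1}$, yielding the stated totals. Your argument can be repaired in exactly this way --- key on the data-curve inflation pattern rather than on length --- and then your facts (i)--(ii) supply the correctness that the paper's proof leaves implicit.
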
 
 
 \begin{proof}
 We denote by $X$ the input dataset. 
  Given polygonal curves $V=v_1, \ldots, v_{m_1}$, $Q=q_1, \ldots, q_{m_2}$, 
 and traversal $T= [ (1,1),(i_2,j_2),(i_3,j_3),\ldots, ({m_1},{m_2}) ]$, one can define sequences of $l:=|T|$ points  $V_T=v_{i_1},\ldots,v_{i_l}$, $Q_T=q_{i_1}, \ldots, q_{i_l}$, where we allow for consecutive duplicates, such that $\forall k \in \{1,\ldots,l\}$, $v_{i_k}$ is the $k$-th point in $V_T$ 
 and $q_{j_k}$ is the $k$-th point in $Q_T$, 
 if and only if $(i_k,j_k)\in T$. 
 
 One traversal of $V$, $Q$ is uniquely defined by the following parameters: its length 
 , the set of indices $\{k \in \{1,\ldots,l\} \mid i_{k+1} - i_k =0  \text{ and } j_{k+1}- j_k=1\}$ for which only $Q$ is progressing and the set of indices 
 $\{k \in \{1,\ldots,l\} \mid i_{k+1} - i_k =1  \text{ and } j_{k+1}- j_k=1\}$ for which both $Q$ and $V$ are progressing. 
 We build one ANN data structure, for $\ell_p$-products of $\ell_2$, for each possible such set of parameters. 
 Each data structure contains at most $|X|$ point sequences which correspond to curves that are compatible to the corresponding set of parameters. 
 We denote by $m=\max(m_1,m_2)$. The total number of data structures is upper bounded by
\[
\sum_{l=m}^{2m} \sum_{t=0}^{m} {{l}\choose{t}} \cdot {{l-t}\choose{m-t}}\leq \sum_{l=m}^{2m} \sum_{t=0}^{m} {{l}\choose{t}} \cdot {{l}\choose{m-t}}= \sum_{l=m}^{2m} {{2l}\choose{m}}\leq \sum_{l=m}^{4m} {{l}\choose{m}}={{4m+1}\choose{m+1}}\leq (4\e)^{m+1}.
\]
 
 For any query curve $Q$, we create all possible point sequences (all possible $Q_Ts$) and we perform one query per ANN data structure. We report the best answer. The probability that the building of one of the $\leq (4\e)^{m+1}$ 
data structures is not successful is less than $1/2$ due to a union bound.  
 \end{proof}
 
We now investigate applications of the above results to the ANN problem, for certain popular distance functions for curves. 
We remark that in the following complexity bounds, we make use of an overestimation of the bound provided by Theorem \ref{Tanncurves}. In particular we use the term $2^{4m}$ instead of $(4e)^m$, for brevity.

\paragraph*{Discrete Fr\'{e}chet Distance.} DFD is naturally included in the distance class of Definition~\ref{Ddist} for $p=\infty$. However, Theorem~\ref{Tanncurves} is valid only when $p$ is bounded. To overcome this issue, $p$ is set to a suitable large value. 
  
 \begin{lemma}
 \label{Lemtransfrechet}
Let $V=v_1,\ldots,v_{m_1} \in \RR^d$ and $U=u_1,\ldots,u_{m_2} \in \RR^d$ be two polygonal curves. Then for any traversal $T$ of $V$ and $U$:
\[(1+\epsilon)^{-1}\cdot\left(\sum_{(i,j)\in T} \|v_{i} -u_{j}\|^{p}\right)^{1/{p}} \leq
\max_{(i,j)\in T} {\|v_{i}-u_{j}\|}\leq \left(\sum_{(i,j)\in T} \|v_{i} -u_{j}\|^{p}\right)^{1/{p}} 
,\]
for $p = \log\left(|T|\right)/\log(1+\epsilon)$.
\end{lemma}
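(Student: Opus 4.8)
The plan is to recognize the statement as nothing more than the standard comparison between the $\ell_\infty$ and $\ell_p$ norms of a finite nonnegative vector, applied to the vector of paired distances along the traversal. Fix a traversal $T$, set $N=|T|$, and for $(i_k,j_k)\in T$ abbreviate $a_k=\|v_{i_k}-u_{j_k}\|\geq 0$; let $M=\max_{(i_k,j_k)\in T}\|v_{i_k}-u_{j_k}\|=\max_k a_k$. The claim to prove is exactly $(1+\epsilon)^{-1}\bigl(\sum_k a_k^p\bigr)^{1/p}\leq M\leq\bigl(\sum_k a_k^p\bigr)^{1/p}$.

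For the right-hand inequality, I would observe that $M^p$ is one of the (nonnegative) summands of $\sum_k a_k^p$, so $M^p\leq\sum_k a_k^p$; since $t\mapsto t^{1/p}$ is monotone on $[0,\infty)$, taking $p$-th roots gives $M\leq\bigl(\sum_k a_k^p\bigr)^{1/p}$. For the left-hand inequality, each $a_k\leq M$ gives $\sum_k a_k^p\leq N\cdot M^p$, hence $\bigl(\sum_k a_k^p\bigr)^{1/p}\leq N^{1/p}\,M$. It then suffices to check $N^{1/p}\leq 1+\epsilon$; taking logarithms, this is equivalent to $\tfrac1p\log N\leq\log(1+\epsilon)$, i.e.\ $p\geq\log N/\log(1+\epsilon)=\log|T|/\log(1+\epsilon)$, which is precisely the hypothesis. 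Rearranging $\bigl(\sum_k a_k^p\bigr)^{1/p}\leq(1+\epsilon)M$ yields the stated bound.

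There is essentially no obstacle: the only minor points to keep in mind are that $|T|\geq\max(m_1,m_2)\geq1$, so $\log|T|\geq0$ and the exponent bound is well posed, and that all quantities are nonnegative so that monotonicity of $t\mapsto t^{1/p}$ and of $t\mapsto t^p$ may be used freely. This lemma is exactly what is needed to sandwich the discrete Fréchet distance (the $p=\infty$ case of Definition~\ref{Ddist}) between $d_p(V,U)$ and $(1+\epsilon)\,d_p(V,U)$ once $p$ is taken logarithmically large in $m$, so that the $\ell_p$-distance machinery of Section~\ref{Sseqs} and Theorem~\ref{Tanncurves} can be applied to DFD.
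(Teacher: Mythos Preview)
Your proof is correct and is essentially the same as the paper's: the paper simply invokes the standard inequality $\|x\|_\infty\leq\|x\|_p\leq|T|^{1/p}\|x\|_\infty$ for $x\in\RR^{|T|}$, and you have unpacked exactly that argument. There is nothing to add.
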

\begin{proof}
For any $x\in \RR^{|T|}$, H\"older's inequality implies that $\|x\|_\infty \leq \|x\|_p \leq \left(|T|\right)^{1/p} \|x\|_{\infty}$. Hence, for $p\geq\log\left(|T|\right)/\log(1+\epsilon)$, 
\[
(1+\epsilon)^{-1}\cdot\left(\sum_{(i,j)\in T} \|v_{i} -u_{j}\|^{p}\right)^{1/{p}} \leq 
|T|^{-1/p}\cdot\left(\sum_{(i,j)\in T} \|v_{i} -u_{j}\|^{p}\right)^{1/{p}} \leq
\max_{(i,j)\in T} {\|v_{i}-u_{j}\|}\leq \left(\sum_{(i,j)\in T} \|v_{i} -u_{j}\|^{p}\right)^{1/{p}} 
\]
\end{proof}
 
\begin{theorem}\label{Tdfdann}

There exists a data structure for the ANN problem for the DFD of curves, with space and preprocessing time in
\[
\tilde{O}(d m^2 n )\times \left( \frac{d}{\log m} +2\right)^{O(m^{O(1/\epsilon)} \cdot d \cdot \log(1/\epsilon))},
\] 
and query time $\tilde{O}(d m^{O(1/\epsilon)}\cdot 2^{4m} \log n)$, where $m$ denotes the maximum length of a polygonal curve, and $\epsilon\in(0,1/2]$. The data structure succeeds with probability $1/2$, which can be amplified by repetition. 
\end{theorem}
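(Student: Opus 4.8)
The plan is to treat DFD as the $\ell_p$-distance of curves for a large but \emph{finite} exponent $p$, and then compose the reduction of Theorem~\ref{Tanncurves} with the sequence data structure of Theorem~\ref{Tannseqs}. Every traversal of two curves of length at most $m$ has $|T|\le 2m$, so the lemma preceding the statement gives $\mathrm{DFD}(V,U)\le d_p(V,U)\le(1+\epsilon)\,\mathrm{DFD}(V,U)$ for all pairs of input curves as soon as $p\ge\log(2m)/\log(1+\epsilon)$; and since $d_p$ is nonincreasing in $p$ with limit $d_\infty=\mathrm{DFD}$, nothing is lost by taking $p=\max\{3,\lceil\log(2m)/\log(1+\epsilon)\rceil\}=\Theta(\log m/\epsilon)$, which in addition places us in the $p>2$ regime of Theorem~\ref{Tannseqs} and makes $2^p=m^{\Theta(1/\epsilon)}$. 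A $(1+\epsilon)$-approximate nearest neighbor under $d_p$ is then a $(1+\epsilon)^2$-approximate nearest neighbor under DFD, and rescaling $\epsilon$ by a constant factor restores the factor $1+\epsilon$.

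I would then instantiate the two ingredients. Theorem~\ref{Tanncurves} demands a sequence structure failing with probability below $2^{-4m-1}$; by Theorem~\ref{Tannseqs} this is obtained from $\Theta(m/\epsilon)$ independent copies, i.e.\ at a $\tilde{O}(m)$ cost in all three complexities. Substituting $p\epsilon=\Theta(\log m)$ into the $p>2$ bounds yields $d/(p\epsilon)+2=\Theta(d/\log m+1)$ and, using $1\le(2+p\epsilon)^2/(p\epsilon)^2\le 9$, $\alpha_{p,\epsilon}=\Theta(2^p\log(1/\epsilon))=m^{\Theta(1/\epsilon)}\log(1/\epsilon)$; hence the amplified sequence structure has space and preprocessing time $\tilde{O}(dmn)\times(d/\log m+2)^{O(m^{1+1/\epsilon}d\log(1/\epsilon))}$ and query time $\tilde{O}(dm^{1+1/\epsilon}\log n)$, after absorbing the factor $m\cdot 2^p$ into $m^{1+1/\epsilon}$. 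Feeding this into Theorem~\ref{Tanncurves} multiplies space by $m(4\e)^{m+1}$ and preprocessing and query time by $(4\e)^{m+1}=2^{O(m)}\le 2^{4m}$, and produces a curve structure of failure probability below $1/2$.

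What remains is simplification. The leftover $2^{O(m)}$ and polynomial-in-$m$ contributions are dominated: $2^{O(m)}\le(d/\log m+2)^{O(m)}$ and the exponent $O(m)$ is swallowed by $O(m^{1+1/\epsilon}d\log(1/\epsilon))$, so they disappear into the exponential part of the space bound, whose polynomial prefactor becomes $\tilde{O}(dm^2n)$; similarly $(4\e)^{m+1}\le 2^{4m}$ and the remaining polynomial-in-$m$ and polynomial-in-$1/\epsilon$ factors reduce under $\tilde{O}(\cdot)$ to the stated query time $\tilde{O}(dm^{1+1/\epsilon}2^{4m}\log n)$. Finally, probability $1/2$ is boosted to $1-\delta$ in the standard way: run $O(\log(1/\delta))$ independent copies, query all of them, and return the reported curve whose DFD to $Q$ is smallest (each such distance being computable in $O(dm^2)$ time).

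The main obstacle is the bookkeeping in the middle step. Because $p$ is not constant, the $2^p$ factors appearing in $\alpha_{p,\epsilon}$ and in the query bound of Theorem~\ref{Tannseqs} must be tracked and matched to the $m^{1+1/\epsilon}$-type factors, while the term $(2+p\epsilon)^2/(p\epsilon)^2$ has to be pinned to $\Theta(1)$ via $p\epsilon=\Theta(\log m)\ge 1$; one also has to dispose of the small-$m$ corner, where forcing $p\ge 3$ is precisely what keeps the $p>2$ branch of Theorem~\ref{Tannseqs} applicable, and to check that the $\Theta(m/\epsilon)$-fold amplification required to meet the $2^{-4m-1}$ hypothesis of Theorem~\ref{Tanncurves} is only a lower-order cost.
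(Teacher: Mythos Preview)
Your proposal is correct and follows essentially the same approach as the paper: choose $p\approx\log m/\log(1+\epsilon)=\Theta(\epsilon^{-1}\log m)$ so that $d_p$ approximates DFD via the preceding lemma, amplify the sequence structure of Theorem~\ref{Tannseqs} to failure probability below $2^{-4m-1}$, plug it into Theorem~\ref{Tanncurves}, and simplify using $p\epsilon=\Theta(\log m)$ and $2^p=m^{\Theta(1/\epsilon)}$. Your write-up is in fact more careful than the paper's own proof about the bookkeeping (the $(2+p\epsilon)^2/(p\epsilon)^2=\Theta(1)$ estimate, the small-$m$ corner forcing $p>2$, and the $(1+\epsilon)^2$ rescaling), but the route is the same.
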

\begin{proof}

We combine Theorem~\ref{Tanncurves} with Theorem~\ref{Tannseqs} for 
\[p=  \frac{\log (2m)}{\log(1+\epsilon)}  \leq \frac{2}{\epsilon}\log (2m), \] 
where the inequality holds because $1+\epsilon \geq \e^{\epsilon/2}$ for $\epsilon \in (0,1/2]$. 
Notice that in order to employ the data structure of Theorem~\ref{Tannseqs} into Theorem~\ref{Tanncurves} we need to amplify the probability of success to $1-2^{-4m-1}$. Hence we need $O(m/\epsilon)$ independent constructions which lead to  a data structure for the ANN problem for $\ell_p$-products of $\ell_p$ with a total of space and preprocessing time in \[\tilde{O}(d m^2 n )\times \left( \frac{d}{p\epsilon} +2 \right)^{O(m \cdot d \cdot \alpha_{p,\epsilon})},\] and {query time} in $\tilde{O}(d m^2  2^p \log n)$, where 
$$
\alpha_{p,\epsilon}=2^p \cdot \log(1/\epsilon)\cdot(2+p\epsilon)^2\cdot \epsilon^{-2} =
2^p \cdot p^2 \cdot  \log(1/\epsilon)\cdot(2+p\epsilon)^2\cdot (p\epsilon)^{-2} =
2^{O(p)} \log(1/\epsilon).
$$
By Theorem~\ref{Tanncurves}, it suffices to build $(4\e)^{m+1}=O(2^{4m})$ such data structures. 

Solving the problem for the $\d_p(\cdot,\cdot)$ distance (instead of the $\d_{\infty}(\cdot,\cdot)$ distance) introduces an approximation factor to the already approximate solution of Theorem~\ref{Tannseqs}. 
By Lemma~\ref{Lemtransfrechet}, 
since $p\geq \log |T|/\log(1+\epsilon)$ for any traversal $T$,  this approximation factor is $(1+\epsilon)$, and hence we get an overall approximation factor of $(1+\epsilon)^2$ which can be reduced to $1+\epsilon$ by rescaling $\epsilon \gets \epsilon/4$.

\end{proof}

\paragraph*{Dynamic Time Warping.} DTW corresponds to the $\ell_1$-distance of polygonal curves in Definition \ref{Ddist}. Let us now combine Theorem \ref{Tanncurves} with each of the Theorems~\ref{Tannseqs} and~\ref{TannHseqs}. 
\begin{theorem}\label{Tdtwann}
There exists a data structure for the ANN problem for DTW of curves, with space and preprocessing time 
\[\tilde{O}(d m^2 n )\times \left(\frac{1}{\epsilon} \right)^{O(m \cdot d \cdot \epsilon^{-2})},\] 
and query time $\tilde{O}(d  \cdot 2^{4m}\log n)$, where $m$ denotes the maximum length of a polygonal curve, and $\epsilon\in(0,1/2]$. The data structure succeeds with probability $1/2$, which can be amplified by repetition. 
\end{theorem}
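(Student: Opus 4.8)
The plan is to mirror the proof of Theorem~\ref{Tdfdann}, except that here the exponent parameter is the fixed value $p=1$ rather than a quantity growing with $m$, which makes the resulting bounds considerably smaller. First I would invoke the observation following Definition~\ref{Ddist}: the DTW distance of two polygonal curves (with Euclidean ground distance) is exactly their $\ell_1$-distance $d_1(\cdot,\cdot)$. Hence it suffices to build an ANN structure for the $\ell_1$-distance of curves, and for that I apply Theorem~\ref{Tanncurves} with $p=1$, which lies in the admissible range $1\le p<\infty$. Note that, as in the statement of Theorem~\ref{Tanncurves}, the resulting structure solves the optimization ANN of Definition~\ref{Dgenann} directly, with no reduction to approximate near neighbor.

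The black box to plug into Theorem~\ref{Tanncurves} is Theorem~\ref{Tannseqs} in the regime $p\in[1,2]$, specialized to $p=1$: it gives an ANN data structure for point sequences in the $\ell_1$-product of $\ell_2$ with space and preprocessing $\tilde{O}(dmn)\times(1/\epsilon)^{O(md\,\alpha_{1,\epsilon})}$, query time $\tilde{O}(dm\log n)$, and success probability $\Omega(\epsilon)$, where $\alpha_{1,\epsilon}=\log(1/\epsilon)\,(2+\epsilon)^2\epsilon^{-2}$. Since $\epsilon\in(0,1/2]$ the factor $(2+\epsilon)^2$ is $\Theta(1)$, so $\alpha_{1,\epsilon}=\Theta(\epsilon^{-2}\log(1/\epsilon))$ and the exponent is $O(md\epsilon^{-2})$ once the $\log(1/\epsilon)$ is absorbed into $\tilde{O}$. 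Theorem~\ref{Tanncurves} requires failure probability below $2^{-4m-1}$, so I first amplify by building $\Theta(\log(2^{4m+1})/\epsilon)=\Theta(m/\epsilon)$ independent copies; this multiplies space, preprocessing and query time by a factor $\tilde{O}(m)$, the $1/\epsilon$ being swallowed by $\tilde{O}$.

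Applying Theorem~\ref{Tanncurves} then produces the curve data structure with space $m\,(4\e)^{m+1}S(n)$, preprocessing $(4\e)^{m+1}T(n)$, query time $(4\e)^{m+1}Q(n)$ and failure probability below $1/2$; amplifying the latter to any constant by $O(1)$ repetitions is standard. It remains to simplify. For space and preprocessing, the multiplicative overhead $m\,(4\e)^{m+1}=2^{O(m)}$ is dominated by the $(1/\epsilon)^{O(md\epsilon^{-2})}$ factor already present, because $2^{O(m)}=(1/\epsilon)^{O(m/\log(1/\epsilon))}$ and $m/\log(1/\epsilon)\le md\epsilon^{-2}$ for $\epsilon\le 1/2$ and $d\ge 1$; gathering the remaining polynomial $m$-factors into the prefactor yields $\tilde{O}(dm^2n)\times(1/\epsilon)^{O(md\epsilon^{-2})}$. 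For the query, $(4\e)^{m+1}\cdot\tilde{O}(dm^2\log n)=\tilde{O}(d\cdot 2^{4m}\log n)$, since $4\e<2^4$ so that $(4\e)^{m+1}\cdot\mathrm{poly}(m)\le 2^{4m}$ for $m$ past a constant (and for smaller $m$ everything is $O(1)$ anyway).

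There is no conceptual obstacle: the statement is a corollary of two already-proved black boxes. The only genuinely delicate points are matching the $2^{-4m-1}$ failure requirement of Theorem~\ref{Tanncurves} through the amplification clause of Theorem~\ref{Tannseqs}, and then checking that the $2^{O(m)}$ and $\mathrm{poly}(m)$ overheads accumulated along the way are absorbed into the stated bounds instead of inflating the exponent. This bookkeeping, rather than any idea, is what one must be careful with.
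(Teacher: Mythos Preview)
Your proof is correct and follows essentially the same approach as the paper: identify DTW with the $\ell_1$-distance of curves, amplify the success probability of the $p=1$ case of Theorem~\ref{Tannseqs} to $1-2^{-4m-1}$, and plug the result into Theorem~\ref{Tanncurves}. Your write-up is actually more careful than the paper's about the bookkeeping (how the $\tilde{O}(m)$ amplification factor and the $(4\e)^{m+1}$ overhead are absorbed), but the strategy is identical.
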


\begin{proof}
We first amplify the probability of success for the data structure of Theorem \ref{Tannseqs} to $1-2^{-4m-1}$. Hence, the data structure for the ANN problem for $\ell_1$-products of $\ell_1$ needs space and preprocessing time in
\[
\tilde{O}(d m^2 n )\times 2^{O(m \cdot d \cdot \alpha_{p,\epsilon})},
\]
and each {query time} costs $\tilde{O}(d m^2  \log n)$, where $\alpha_{p,\epsilon}=\log(1/\epsilon)\cdot(2+\epsilon)^2\cdot (\epsilon)^{-2} $. 
Theorem~\ref{Tanncurves} concludes the proof. 
\end{proof}
\begin{theorem}\label{Tdtwann2}
There exists a data structure for the ANN problem for DTW of curves, with space and preprocessing time 
 $\tilde{O}(d\cdot 2^{4m}n^{1+\rho_u}),$ and the query time is in $\tilde{O}( d\cdot 2^{4m} n^{\rho_q})$, where $ \rho_q,\rho_u$ satisfy:
\[
(1+\epsilon) \sqrt{\rho_q}+\epsilon\sqrt{\rho_u}\geq \sqrt{1+2\epsilon}.
\]
 We assume $\epsilon \in (0,1/2]$. The data structure succeeds with probability $1/2$, which can be amplified by repetition. 
\end{theorem}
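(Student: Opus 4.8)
The plan is to obtain Theorem~\ref{Tdtwann2} purely by composition: DTW coincides with the $\ell_1$-distance of polygonal curves (Definition~\ref{Ddist} with $p=1$), so I would feed the high-dimensional ANN data structure for $\ell_1$-products of $\ell_2$ from Theorem~\ref{TannHseqs} into the reduction of Theorem~\ref{Tanncurves}, exactly as the DTW result of Theorem~\ref{Tdtwann} was derived from Theorem~\ref{Tannseqs}.

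Concretely, the first step is to instantiate Theorem~\ref{TannHseqs} at $p=1$. This yields a data structure for ANN in $\ell_1$-products of $\ell_2$ with space and preprocessing $\tilde{O}(n^{1+\rho_u})$ and query time $\tilde{O}(n^{\rho_q})$, where the general trade-off condition $(1+\epsilon)^p\sqrt{\rho_q}+((1+\epsilon)^p-1)\sqrt{\rho_u}\ge\sqrt{2(1+\epsilon)^p-1}$ specializes to exactly $(1+\epsilon)\sqrt{\rho_q}+\epsilon\sqrt{\rho_u}\ge\sqrt{1+2\epsilon}$, which is the constraint claimed in the statement. Its success probability is only $\Omega(\epsilon)$, whereas Theorem~\ref{Tanncurves} requires the underlying structure to fail with probability at most $2^{-4m-1}$; hence the second step is amplification. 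By the amplification clause of Theorem~\ref{TannHseqs}, building $\Theta(m/\epsilon)=\tilde{O}(m)$ independent copies and returning the best candidate drives the failure probability below $2^{-4m-1}$, at the cost of an extra $\tilde{O}(m)$ factor in space, preprocessing and query time.

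The third step is to plug this amplified structure into Theorem~\ref{Tanncurves}, which then produces an ANN data structure for the $\ell_1$-distance of curves (i.e.\ DTW) whose space, preprocessing and query costs are those of the plugged-in structure multiplied by $m\,(4\e)^{m+1}$, $(4\e)^{m+1}$ and $(4\e)^{m+1}$ respectively, with overall failure probability below $1/2$ (boostable by independent repetition). Finally, I would simplify: the $\mathrm{poly}(m)$ overhead from amplification, together with the $(4\e)^{m+1}$ factor, is $O(2^{4m})$, since $2^{4m}/(4\e)^{m+1}$ grows geometrically in $m$ (as $16/(4\e)>1$) and thus dominates any polynomial in $m$; so the space and preprocessing become $\tilde{O}(2^{4m}\,n^{1+\rho_u})$ and the query time $\tilde{O}(2^{4m}\,n^{\rho_q})$, with $\rho_q,\rho_u$ constrained as stated.

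There is no genuine obstacle here — the result is a mechanical chaining of two already-proved theorems. The only point requiring care is the probability accounting: each of the $\le(4\e)^{m+1}$ sub-structures created inside the reduction of Theorem~\ref{Tanncurves} must individually fail with probability $<2^{-4m-1}$, so that a union bound over all of them still leaves total failure probability $<(4\e)^{m+1}\cdot 2^{-4m-1}<1/2$; this is precisely what the amplification in the second step provides. One should also verify that the $\tilde{O}(\cdot)$ notation legitimately absorbs the leftover $\mathrm{poly}(m)$ and $\mathrm{poly}(1/\epsilon)$ factors once the $2^{4m}$ term is present, which it does since those are dominated by (resp.\ hidden inside) $\tilde{O}$.
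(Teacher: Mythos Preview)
Your proposal is correct and follows essentially the same route as the paper: amplify the success probability of the Theorem~\ref{TannHseqs} structure (at $p=1$) to $1-2^{-4m-1}$ via $\tilde{O}(m)$ independent copies, then plug it into Theorem~\ref{Tanncurves}. Your additional remarks on the $p=1$ specialization of the trade-off constraint and on absorbing the $(4\e)^{m+1}\cdot\mathrm{poly}(m)$ overhead into $2^{4m}$ are accurate and merely spell out details the paper leaves implicit.
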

\begin{proof}
First, amplify the probability of success for the data structure of Theorem \ref{TannHseqs} to $1-2^{-4m-1}$, by building independently $\tilde{O}(m)$ such data structures. 
We substitute the resulting data structure into Theorem~\ref{Tanncurves}. The resulting space usage and preprocessing time is in 
\[
\tilde{O}\left((4\e)^{m+1} \cdot m(n^{1+\rho_u}+dnm)\right)=\tilde{O}\left(d\cdot 2^{4m} n^{1+\rho_u} \right),
\]
and the query time is in 
\[
\tilde{O}\left((4\e)^{m+1} \cdot m(n^{\rho_q}+dm) \right)=
\tilde{O}\left(d\cdot 2^{4m} n^{\rho_q}  \right).
\]
\end{proof}

\section{Discrete Fr\'{e}chet distance in high dimensions}\label{Sechd}

In this section, we focus on the Discrete Fr\'{e}chet distance, when the dimension of the ambient space is high, i.e.\ $d=\omega(\log n)$. We combine known results on random projections in order to establish a scheme which is sensitive to the intrinsic dimensionality of the vertex-points, and an improved result for the approximate near neighbor problem.

\subsection{Low doubling dimension}
\label{ssection:doub}
For any point $x\in\RR^d$ and a set $X\subseteq \RR^d$, we define $d(x,X)= \inf_{y\in X} \|x-y\|_2$. We also need the following definition of the doubling constant of some arbitrary metric space. 

\begin{definition}
Consider any metric space with ground set $X$ and let $B(p,r)$ be the metric ball centered at $p\in X$ with radius $r$. 
The {\em doubling constant} of $X$, denoted by $\lambda_X$, is the smallest integer $\lambda_X$ such that for any
$p \in X$ and $r > 0$, the ball $B(p, r)$ (in $X$) can be covered by at most $\lambda_X$ balls of radius $r/2$ centered at points in $X$. 
\end{definition}

The notion of doubling dimension of a metric space is also relevant. The \emph{doubling dimension} is equal to the logarithm of the doubling constant. 

\begin{theorem}{\em\cite[Thm~4.1]{IN07}} \label{Thmrandprojdoub}
Let $G$ be a $k\times d$ matrix with i.i.d.~random variables following $N(0,1)$ and let matrix $A=\frac{1}{\sqrt{k}}G$. 
For $X\subseteq \RR^d$, $\epsilon \in (0, 1)$ and $\delta \in (0, 1/2)$, there exists $k = O\left( \frac{\log(2/\epsilon)}{\epsilon^2} \cdot \log (1/\delta) \cdot \log \lambda_X \right)$
such that, for every $x_0 \in X$, with probability at least $1 - \delta$,
\begin{itemize}
    \item[(1)] $d(Ax_0 , A(X \setminus \{x_0 \})) \leq  (1 + \epsilon)\, d(x_0 , X \setminus \{x_0 \})$, 
    \item[(2)] every $x \in X$ with $\|x_0 - x \|_2 > (1 + 2\epsilon)\, d(x_0 , X \setminus \{x_0 \})$ satisfies
\[
\|Ax_0 - Ax\|_2 > (1 + \epsilon)\, d(x_0 , X \setminus \{x_0\}),
\]
where for any set $X$, we let $A(X)=\{Ax \mid x\in X\}$. 
\end{itemize}
\end{theorem}
 \begin{theorem}\label{TannseqsDD}
 
 Let $\lambda_X$ be the doubling constant of the input dataset, i.e.~the doubling constant of the set of all points appearing in the data point sequences. Let $m$ be the (maximum) length of the point sequences. 
There exists a data structure which solves the ANN problem for point sequences in $\ell_{\infty}$-products of $\ell_2$, and satisfies the following bounds on performance: 
space usage and preprocessing time in 
\[
\tilde{O}\left(d m n  \right)\times \left( \frac{\log \lambda_X \cdot \log (2/\epsilon)}{\epsilon^2} +2\right)^{O(m^{O(1/\epsilon)}  \cdot  \log^2(2/\epsilon) \cdot \log \lambda_X)},
\]
query time in $\tilde{O}\left(dm^{O(1/\epsilon)}\cdot \log \lambda_X \cdot \log n\right)$, where $\epsilon \in (0,1/2]$. For any query point sequence, the preprocessing algorithm succeeds with constant probability.
\end{theorem}
\begin{proof}

First, we employ Theorem~\ref{Thmrandprojdoub} for $\delta={1/(2m)}$. Then, by a union bound over the $m$ points in the query point sequence, we have that with probability $1/2$, the approximate nearest neighbor under the $\ell_{\infty}$-product of $\ell_2$ metric, is approximately preserved (as guaranteed by Theorem \ref{TannseqsDD}). Now, we are able to invoke Theorem \ref{Tannseqs} for large enough $p=  \frac{\log m}{1+\epsilon}  \leq 2\epsilon^{-1}\log m$, 
and 
$d= O\left( \frac{\log(2/\epsilon)}{\epsilon^2} \cdot \log m \cdot \log \lambda_X \right)$. This large value of $p$ guarantees that the multiplicative approximation factor which is introduced when solving an $\ell_p$-product instance instead of an $\ell_{\infty}$-product instance is $1+\epsilon$, which is a consequence of H\"older's inequality (as in the proof of Lemma \ref{Lemtransfrechet}). Since we invoke a $1+\epsilon$ approximate solution for the $\ell_p$-product of $\ell_2$ instance, we have an overall approximation of $(1+\epsilon)^2$ which can be reduced to $1+\epsilon$ by rescaling $\epsilon \gets \epsilon/4$.  
\end{proof}

Theorem~\ref{TannseqsDD} has immediate implications for the ANN problem under the DFD. 

\begin{theorem} 
\label{TannDFDdd}
Let $\lambda_X$ be the doubling constant of the input dataset, i.e.~the doubling constant of the set of all points-vertices of the polygonal curves. 
There exists a data structure for the ANN problem for the DFD of curves, with space and preprocessing time in
\[
\tilde{O}\left( d m n  \right)\times \left( \frac{\log \lambda_X \cdot \log (2/\epsilon)}{\epsilon^2} +2\right)^{O(m^{O(1/\epsilon)}  \cdot  \log^2(2/\epsilon) \cdot \log \lambda_X)},
\] 
and query time in $\tilde{O}\left(dm^{O(1/\epsilon)}2^{4m}\cdot \log \lambda_X \cdot \log n\right)$, where $m$ denotes the maximum length of a polygonal curve, and $\epsilon\in(0,1/2]$. The data structure succeeds with probability $1/2$, which can be amplified by repetition. 
\end{theorem}
\begin{proof}
We combine Theorem~\ref{Tanncurves} with Theorem~\ref{TannseqsDD}. 
\end{proof}

\subsection{Approximate near neighbors in high dimensions}
All results so far concern the ANN problem. In this subsection, we focus on the simplified task of deciding the approximate {near} neighbor problem with witness. Given a dataset $\mathcal{P}$ of curves, a radius parameter $r>0$ and an error parameter $\epsilon>0$, the goal is to build a data structure 
which supports the following type of query. For a query curve $Q$:
\begin{itemize}
    \item if there exists a curve in $\mathcal{P}$ at distance $\leq r$ from $Q$, then return a curve in $\mathcal{P}$ at distance $\leq (1+\epsilon)r$,
    \item if all curves in $\mathcal{P}$ are at distance $>(1+\epsilon)r$ from $Q$, then return "no".
\end{itemize}
In short, the data structure returns either a point at distance $\leq (1+\epsilon)r$ from $Q$, or "no". 
In the intermediate case that the nearest curve lies at some distance in $(r, (1+\epsilon)r]$, the data structure returns any answer.

Given a set of $n$ points, if the dimension is $\omega(\log n) $, then one can apply the Johnson-Lindestrauss lemma to reduce the dimension to $O\left(\frac{\log n}{\epsilon^2}\right)$, while probabilistically preserving the Euclidean norms up to $(1\pm \epsilon)$ factors. The main observation in this subsection is that, since we focus on the approximate near neighbor problem, we may use as target space the Hamming space of dimension $O\left(\frac{\log n}{\epsilon^2}\right)$ and, thus, significantly simplify the task of searching in the projection space.  
The reason for which it is possible to employ a Hamming target space is that we only need to ensure the following condition: points with distance below a given threshold must remain near after the projection, while points with distance beyond the threshold must remain far apart.

\if 0
First, we present a random projection lemma, where the host space is $\ell_1^{O\left(\frac{\log n}{\epsilon^2}\right)}$. Notice, the subtle difference between the results of Section \ref{Sseqs}, which hold for any $p\geq 1$. The following theorem is basically a direct corollary of $\cite[Theorem  5.1]{Mat08}$, which is also discussed there. 

\begin{theorem}[\cite{Mat08}]
Let $X\subset \RR^d$ be a set of $n$ vectors. 
There exists a distribution over randomized linear mappings  $T:~\RR^{d} \mapsto \RR^{\epsilon^{-2}\log n}$, such that for any $v\in X$,
\[
\|T(v)\|_1 \in (1\pm \epsilon) \|v\|_2,
\]
with high probability. 
\end{theorem}
\fi 

The main ingredient is a randomized mapping from $\ell_1^d$ to $\{0,1\}^{{O\left(\frac{\log n}{\epsilon^2}\right)}}$. This mapping also appears in~\cite{ACW16}, and it 
resembles ideas which appear in~\cite{KOR00}, and~\cite{AEPS16}. We include the proof for completeness. 

\begin{lemma}\label{LemTohamm}
Let $X\subset \RR^d$ be a set of $n$ points. 
There exists a distribution over mappings  $f:~\RR^{d} \mapsto \{0,1 \}^{O\left(\epsilon^{-2}\log n\right)}$, such that for any $p,q\in X$,
\[
\|p-q\|_1 \leq r \implies \|f(p) -f(q)\|_1 \leq r',
\]
\[
\|p-q\|_1 \geq (1+\epsilon)r \implies \|f(p) -f(q)\|_1 >r',
\]
where $r'$ is a constant depending on the target dimension. The randomized embedding succeeds with high probability. 
\end{lemma}
\begin{proof}
We denote by $F$ the Locality Sensitive Hashing family of~\cite{AI06}, which is 
$(1-\frac{1}{\alpha},1-\frac{c}{c+\alpha},1,c)$-sensitive: 
if $\|x-y\|_1 \leq 1$ then $\Pr_{h\in F}[h(x)=h(y)]\geq 1-\frac{1}{\alpha}$, and if $\|x-y\|_1 \geq c$ then $\Pr_{h\in F}[h(x)=h(y)]\leq 1-\frac{c}{\alpha+c}$. We build the amplified family of functions $G_k=\{g(x)=(h_1(x),\ldots,h_k(x)) : i=1,\ldots, k,~ h_i\in F) \}$. 
Setting $\alpha = k = \log{n}$, we have:
\[
{p_1}={\Big(1-\frac{1}{\alpha}\Big)}^k={\Big(1-\frac{1}{\log{n}}\Big)}^{\log{n}} \geq
 {\Big(\exp \Big(-\frac{1}{\log n-1}\Big)\Big)}^{\log{n}}\geq \frac{1}{\e^{1+o(1)}},
\] 
\[
{p_2}={\Big(1- \frac{c}{\alpha+{c}}\Big)}^k = {\Big(1- \frac{c}{\log n+c}\Big)}^{\log n} .
\]
Hence, 
\[
{p_2} \geq \exp(-c) \geq \frac{1}{\e \cdot (2c-1)},
\]
and 
\[
{p_2} \leq \exp \Big(-\frac{c}{1+\frac{c}{\log n}}\Big)=  \exp \Big(-\frac{c}{1+o(1)}\Big) 
\leq \exp \big(-c+o(1)\big) \leq \frac{\e^{o(1)}}{\e c}.
\]

We first sample $g_1\in G_k$. 
We denote by $g_1(P)$ the image of $P$ under $g_1$, which is a set of nonempty buckets.
Now each nonempty bucket $x\in g_1(P)$ is mapped to $\{ 0,1 \}$: with probability $1/2$,  set $f_1(x)=0$, otherwise set $f_1(x)=1$. 

This is repeated $d'$ times, and eventually for $p\in \RR^d$, we compute the function \[f(p)=(f_1(g_1(p)),\ldots,f_{d'}(g_{d'}(p))),\] where $f: P \to \{0, 1 \}^{d'}$.
Now, observe that 
\[
\|p-q\|_2\leq r \implies \EE  [\|f_i(g_i(p))-f_i(h_i(q))\|_1]  \leq 0.5 (1-p_1), \; i=1,\dots,d'\implies   \EE  [\|f(p)-f(q)\|_1]\leq 0.5\cdot d' \cdot (1-p_1), \]
\[\|p-q\|_2\geq cr \implies \EE  [\|f_i(g_i(p)-f_i(g_i(q))\|_1]\geq 0.5 (1-p_2),\, i=1,\dots,d'\implies  \EE  [\|f(p)-f(q)\|_1]\geq 0.5\cdot d'\cdot (1-p_2). \]

Finally, the lemma holds by standard Chernoff bounds, and by making use of the bounds on $p_1$, $p_2$. 
\end{proof}

\begin{theorem}\label{ThmseqDFDhd}
There exists a data structure which solves the approximate near neighbor problem for point sequences in $\ell_{\infty}$-products of $\ell_2$, and satisfies the following bounds on performance: 
space usage and preprocessing time in 
$O(dnm)+(nm)^{O(m\epsilon^{-2})},$ query time in $\tilde{O}(m \log n)$. For any query point sequence, the preprocessing algorithm succeeds with constant probability.
\end{theorem}
\begin{proof}

First, we randomly project points from $\ell_2^d$ to $\ell_1^{\tilde{O(d)}}$, while approximately preserving all distances up to factors $1\pm \epsilon$ (see e.g.~\cite{Mat08}). Then, we employ Lemma \ref{LemTohamm} to sample a mapping $f(\cdot)$ which allows us to map points to $\{0,1\}^{k}$, $k=O(\epsilon^{-2} \log (nm))$. 
Data point sequences are stored in a hash table (assuming perfect hashing). Any point sequence $x_1,\ldots,x_m$ is associated to a tuple of strings $(f(x_1),\ldots,f(x_m))$ (or equivalently a string of length $km$), which serves as a key.    Now, for each data point sequence $p_1,\ldots,p_m$, we store pointers to all buckets with keys $(t_1,\ldots,t_m)\in \left(\{0,1\}^k\right)^m$, such that $\max_i\|f(p_i)-t_i\|_1 \leq r'$, where $r'$ is defined in Lemma \ref{LemTohamm}. For a query sequence $q_1,\ldots,q_m$, we compute $(f(q_1),\ldots,f(q_m))$ and we probe the hashtable to return a pointer to a near neighbor (if any). 
Hence, we need a total of $O(dmn+n 2^{km})=O(dmn)+(nm)^{O(m\epsilon^{-2})}$ of storage. The query time is $O(dm+km)=\tilde{O}(dm \log n)$. 
\end{proof}

\begin{theorem}\label{TannDFDhd}
There exists a data structure for the approximate near neighbor problem under the DFD of curves, with space and preprocessing time in
$O(dn)+(nm)^{O(m\epsilon^{-2})},$ and query time in $\tilde{O}\left(d\cdot 2^{4m}\cdot \log n\right)$, where $m$ denotes the maximum length of a polygonal curve, and $\epsilon\in(0,1/2]$. The data structure succeeds with probability $1/2$, which can be amplified by repetition. 
\end{theorem}
\begin{proof}
We use the same construction as in the proof of  Theorem~\ref{Tanncurves}. Instead of using an ANN data structure as our main building block, we use the data structure of  Theorem~\ref{ThmseqDFDhd}. We build $O(2^{4m})$ such data structures, one for each traversal, and we probe each one of them for an approximate near neighbor in the $\ell_{\infty}$-products of $\ell_2$ metric. Notice that we only need to store our original dataset once and refer to it with pointers.   
If one of the data structures returns a data curve, then we stop searching and report that curve. If none of them returns a data curve, then we report "no". 
\end{proof}

\section{Conclusion}

Thanks to the simplicity of the approach, we expect it to lead to software of practical interest so we are working on a C++ implementation. We may apply it to real scenarios with data from road segments or time series. 

The key ingredient of our approach is a randomized embedding from $\ell_2$ to $\ell_p$ which is the first step to the ANN solution for $\ell_p$-products of $\ell_2$. The embedding is essentially a Gaussian projection and it exploits the $2$-stability property of normal variables, along with standard properties of their tails. We expect that a similar result can be achieved for $\ell_p$-products of $\ell_q$, where $q\in[1,2)$. One related result for ANN \cite{BG16}, offers a dimension reduction for $\ell_q$,  $q\in[1,2)$. 

\bibliography{jlann}

\begin{thebibliography}{10}

\bibitem{AD18}
P.~Afshani and A.~Driemel.
\newblock On the complexity of range searching among curves.
\newblock In {\em Proceedings of the 28th Annual {ACM-SIAM} Symposium on
  Discrete Algorithms, {SODA} 2018, New Orleans, USA, January 7-10, 2018},
  pages 898--917, 2018.

\bibitem{AC09}
N.~Ailon and B.~Chazelle.
\newblock The fast johnson--lindenstrauss transform and approximate nearest
  neighbors.
\newblock {\em {SIAM} J. Comput.}, 39(1):302--322, 2009.

\bibitem{ACW16}
J.~Alman, T.~M. Chan, and R.~R. Williams.
\newblock Polynomial representations of threshold functions and algorithmic
  applications.
\newblock In {\em {IEEE} Symposium on Foundations of Computer Science (FOCS),
  New Brunswick, NJ, USA}, pages 467--476, 2016.

\bibitem{AEP18}
E.~Anagnostopoulos, I.~Z. Emiris, and I.~Psarros.
\newblock Randomized embeddings with slack and high-dimensional approximate
  nearest neighbor.
\newblock {\em {ACM} Trans. Algorithms}, 14(2):18:1--18:21, 2018.

\bibitem{Athes09}
A.~Andoni.
\newblock {\em {NN} search: the old, the new, and the impossible}.
\newblock PhD thesis, Massachusetts Institute of Technology, Cambridge, MA,
  {USA}, 2009.

\bibitem{ACP08}
A.~Andoni, D.~Croitoru, and M.~Patrascu.
\newblock Hardness of nearest neighbor under l-infinity.
\newblock In {\em 49th Annual {IEEE} Symposium on Foundations of Computer
  Science, {FOCS} 2008, Philadelphia, {USA}, October 25-28, 2008}, pages
  424--433, 2008.

\bibitem{AI06}
A.~Andoni and P.~Indyk.
\newblock Efficient algorithms for substring near neighbor problem.
\newblock In {\em Proc.\ {ACM-SIAM} Symposium on Discrete Algorithms (SODA),
  Miami, Florida}, pages 1203--1212, 2006.

\bibitem{ALRW17}
A.~Andoni, T.~Laarhoven, I.~Razenshteyn, and E.~Waingarten.
\newblock Optimal hashing-based time-space trade-offs for approximate near
  neighbors.
\newblock In {\em Proc. ACM-SIAM Symp. Discrete Algorithms (SODA)}, 2017.
\newblock Also as arxiv.org/abs/1608.03580.

\bibitem{AdFM11}
S.~Arya, G.~D. da~Fonseca, and D.~M. Mount.
\newblock Approximate polytope membership queries.
\newblock In {\em Proc.\ 43rd Annual ACM Symp.\ Theory of Computing}, STOC'11,
  pages 579--586, 2011.

\bibitem{AMM09}
S.~Arya, T.~Malamatos, and D.~M. Mount.
\newblock Space-time tradeoffs for approximate nearest neighbor searching.
\newblock {\em J. ACM}, 57(1):1:1--1:54, 2009.

\bibitem{AEPS16}
G.~Avarikioti, I.~Z. Emiris, I.~Psarros, and G.~Samaras.
\newblock Practical linear-space approximate near neighbors in high dimension.
\newblock {\em CoRR}, abs/1612.07405, 2016.

\bibitem{BaldusB2017}
J.~Baldus and K.~Bringmann.
\newblock A fast implementation of near neighbors queries for {F}r\'{e}chet
  distance {(GIS Cup)}.
\newblock In {\em Proceedings of the 25th ACM SIGSPATIAL International
  Conference on Advances in Geographic Information Systems}, SIGSPATIAL'17,
  pages 99:1--99:4, 2017.

\bibitem{BG16}
Y.~Bartal and L.~{-}A. Gottlieb.
\newblock Dimension reduction techniques for $\ell_p$,
  (1{\textless}p{\textless}2), with applications.
\newblock In {\em 32nd International Symposium on Computational Geometry, SoCG
  2016, June 14-18, 2016, Boston, MA, {USA}}, pages 16:1--16:15, 2016.

\bibitem{BKL06}
A.~Beygelzimer, S.~Kakade, and J.~Langford.
\newblock Cover trees for nearest neighbor.
\newblock In {\em Proc.\ 23rd Intern.\ Conf.\ Machine Learning}, ICML'06, pages
  97--104, 2006.

\bibitem{BuchinDDM17}
K.~Buchin, Y.~Diez, T.~van Diggelen, and W.~Meulemans.
\newblock Efficient trajectory queries under the {F}r{\'{e}}chet distance {(GIS
  Cup)}.
\newblock In {\em Proc. 25th Intern. Conference on Advances in Geographic
  Information Systems (SIGSPATIAL)}, pages 101:1--101:4, 2017.

\bibitem{BGM17}
M.~de~Berg, J.~Gudmundsson, and A.D. Mehrabi.
\newblock A dynamic data structure for approximate proximity queries in
  trajectory data.
\newblock In {\em Proc. 25th ACM SIGSPATIAL Intern. Conf. Advances Geographic
  Information Systems}, SIGSPATIAL '17, pages 48:1--48:4, New York, NY, USA,
  2017. ACM.

\bibitem{DS17}
A.~Driemel and F.~Silvestri.
\newblock Locality-sensitive hashing of curves.
\newblock In {\em Proc. 33rd Intern. Symposium on Computational Geometry},
  pages 37:1--37:16, 2017.

\bibitem{DutschV17}
F.~D{\"{u}}tsch and J.~Vahrenhold.
\newblock A filter-and-refinement- algorithm for range queries based on the
  {{F}r{\'{e}}chet} distance {(GIS Cup)}.
\newblock In {\em Proc. 25th Int. Conference on Advances in Geographic
  Information Systems (SIGSPATIAL)}, pages 100:1--100:4, 2017.

\bibitem{EmirisP18}
I.~Z. Emiris and I.~Psarros.
\newblock Products of euclidean metrics and applications to proximity auestions
  among curves.
\newblock In {\em 34th International Symposium on Computational Geometry, SoCG
  2018, Budapest, Hungary}, pages 37:1--37:13, 2018.

\bibitem{HIM12}
S.~Har{-}Peled, P.~Indyk, and R.~Motwani.
\newblock Approximate nearest neighbor: Towards removing the curse of
  dimensionality.
\newblock {\em Theory of Computing}, 8(1):321--350, 2012.

\bibitem{HPM05}
S.~Har-Peled and M.~Mendel.
\newblock Fast construction of nets in low dimensional metrics, and their
  applications.
\newblock In {\em Proc.\ Symp.\ Computational Geometry}, SoCG 2005, pages
  150--158, 2005.

\bibitem{HK02}
B.~{Huang} and W.~{Kinsner}.
\newblock Ecg frame classification using dynamic time warping.
\newblock In {\em IEEE CCECE2002. Canadian Conf. Electrical \& Computer Engin.
  Conf. Proceedings (No. 02CH37373)}, volume~2, pages 1105--1110, May 2002.

\bibitem{I02}
P.~Indyk.
\newblock Approximate nearest neighbor algorithms for frechet distance via
  product metrics.
\newblock In {\em Proc. 18th Annual Symp.~on Computational Geometry}, SoCG '02,
  pages 102--106, New York, NY, USA, 2002. ACM.

\bibitem{IN07}
P.~Indyk and A.~Naor.
\newblock Nearest-neighbor-preserving embeddings.
\newblock {\em ACM Trans. Algorithms}, 3(3), 2007.

\bibitem{JXZ08}
M.~Jiang, Y.~Xu, and B.~Zhu.
\newblock Protein structure-structure alignment with discrete fr\'{e}chet
  distance.
\newblock {\em J. Bioinformatics and Computational Biology}, 06(01):51--64,
  2008.

\bibitem{JL84}
W.~Johnson and J.~Lindenstrauss.
\newblock Extensions of {L}ipschitz mappings into a {H}ilbert space.
\newblock In {\em Proc. Conf. Modern analysis and probability (New Haven,
  Conn., 1982)}, volume~26 of {\em Contemporary Mathematics}, pages 189--206.
  American Mathematical Society, 1984.

\bibitem{KR02}
D.R. Karger and M.~Ruhl.
\newblock Finding nearest neighbors in growth-restricted metrics.
\newblock In {\em Proc. 34th Annual ACM Symposium on Theory of Computing}, STOC
  '02, pages 741--750, New York, USA, 2002. ACM.

\bibitem{KOR00}
E.~Kushilevitz, R.~Ostrovsky, and Y.~Rabani.
\newblock Efficient search for approximate nearest neighbor in high dimensional
  spaces.
\newblock {\em {SIAM} J. Comput.}, 30(2):457--474, 2000.

\bibitem{Mat08}
J.~Matou\v{s}ek.
\newblock On variants of the {J}ohnson-{L}indenstrauss lemma.
\newblock {\em Random Struct. Algorithms}, 33(2):142--156, September 2008.

\bibitem{giscup2017}
M.~Werner and D.~Oliver.
\newblock {ACM SIGSPATIAL GIS} {C}up 2017: Range queries under {F}r\'{e}chet
  distance.
\newblock {\em SIGSPATIAL Special}, 10(1):24--27, June 2018.

\end{thebibliography}
\end{document}